\documentclass{article}

\usepackage[inline]{enumitem}
\RequirePackage[colorlinks,citecolor=blue,urlcolor=blue]{hyperref}
\usepackage[
    top=1in,
    bottom=1in,
    left=1in,
    right=1in
]{geometry}
\usepackage{amsthm}
\usepackage[noblocks]{authblk}
\usepackage{float}
\usepackage{amsmath} 
\usepackage{amsfonts}
\usepackage{caption}
\usepackage{multirow}
\usepackage{graphicx} 
\usepackage{booktabs} 
\usepackage{siunitx}  
\usepackage{natbib}
\usepackage{bm}

\newtheorem{theorem}{Theorem}
\newcommand{\tr}{{\rm tr}}

\newtheorem{proposition}[theorem]{Proposition}%

\newcommand{\E}{{\rm E}}
\newcommand{\Var}{{\rm Var}}
\newcommand{\Cor}{{\rm Cor}}
\newcommand{\Cov}{{\rm Cov}}

\newcommand{\x}{\boldsymbol{x}}
\newcommand{\w}{\boldsymbol{w}}
\newcommand{\bbeta}{\boldsymbol{\beta}}
\newcommand{\bgamma}{\boldsymbol{\gamma}}

\newcommand{\diag}{{\rm diag}}

\usepackage{color, xcolor}
\usepackage{soul} 

\begin{document}

\title{Principal Components Decomposition of Fraction of Variance Explained in High Dimensional Linear Models with Strong Correlation}

\author[1]{Man Luo}
\author[2,3]{Chun Chieh Fan}
\author[4]{David Azriel}
\author[*1,5]{Armin Schwartzman}

\affil[1]{Division of Biostatistics and Bioinformatics, Herbert Wertheim School of Public Health and Human Longevity Science, University of California San Diego, La Jolla, CA, USA}

\affil[2]{Center for Population Neuroscience and Genetics, Laureate Institute for Brain Research, Tulsa, OK, USA
}
\affil[3]{Department of Radiology, School of Medicine, University of California San Diego, La Jolla, CA, US}

\affil[4]{Technion–Israel institute of Technology, Haifa, Israel}

\affil[5]{Halıcıoğlu Data Science Institute, University of California San Diego, La Jolla, CA, US}

\affil[*]{Corresponding author. E-mail: armins@ucsd.edu }

\maketitle

\abstract{
The fraction of variance explained (FVE) in a linear model quantifies the extent to which predictors account for outcome variability. In high-dimensional settings, where traditional FVE estimators do not apply, modern FVE estimators such as GWASH or linear mix-effect model estimated through the restricted maximum likelihood (LMM-REML) struggle with strong correlation among predictors, often found, for example, in brain imaging data. We propose a decomposition framework that partitions the FVE into two components: a low-dimensional component capturing the strong correlation, estimable by low dimensional methods, and a high-dimensional component with remaining weak correlation, estimable by high dimensional methods. Simulations demonstrate that decomposing dominant principal components (PCs) and estimating the high-dimensional FVE using GWASH or LMM-REML leads to improved bias reduction compared to directly applying standard approaches such as GWASH and LMM-REML. Our method shows consistent performance asymptotically as both the number of predictors and the number of samples increase. We illustrate the method in an analysis of the Adolescent Brain Cognitive Development (ABCD) brain imaging dataset, capturing nuanced heritability signals in the FVE of cognitive measures predicted by high-resolution brain imaging data.}

\flushbottom
\thispagestyle{empty}

\section{Introduction}
The fraction of variance explained (FVE) serves as a measure of how much information about an outcome is accounted for by predictor variables. Its estimation has been significantly advanced by the emergence of high-dimensional data, particularly in genetics and neuroimaging. The availability of large-scale brain imaging datasets with thousands of features and genome-wide association studies (GWAS), which include millions of single nucleotide polymorphisms (SNPs), has opened new avenues for estimating FVE. Accurate FVE estimation can reveal the extent to which various factors affect phenotypes and improve our understanding of diseases. In neuroimaging, for instance, FVE estimation can quantify the extent to which brain imaging features explain inter-individual differences in neuroanatomy linked to mental illness, thereby providing insights into underlying neurobiological processes. This paper focuses on estimating FVE in neuroimaging studies involving scenarios with a single univariate phenotype, such as a quantitative trait.

Estimating FVE from neuroimaging data presents two main challenges. The primary challenge is the small sample size relative to the large number of features. In brain imaging, the relationship between the number of features ($m$) and the number of individuals ($n$) varies by modality. Voxel-based, surface-based and connectome-wide analyses often involve far more features than subjects ($m \gg n$), even in large datasets such as the UK Biobank. For example, voxel-based analyses (e.g. structural MRI, fMRI, DTI) typically involve $m$ ranging from 100{,}000 to over 1{,}000{,}000, depending on image resolution and field of view. Although sample sizes have increased with initiatives such as the Human Connectome Project (approximately 1{,}200, \cite{van_essen_wu-minn_2013}), the UK Biobank (approximately 500{,}000 individuals, \cite{sudlow_uk_2015}), and with more diverse genetic ancestry samples in the ABCD Study (approximately 12{,}000, \cite{casey_adolescent_2018}), $m$ almost always substantially exceeds $n$. In surface-based analyses (e.g. sMRI, fMRI), $m$ ranges from 10{,}000 to over 600{,}000 vertices across both hemispheres, depending on mesh density and reconstruction methods. Region of interest (ROI) analyses within connectome-wide frameworks may sometimes yield $n > m$, but $m$ can still be large. When $n > m$, FVE can be easily assessed using the adjusted $R^2$, a classical method for low-dimensional settings. However, in high-dimensional settings, standard linear regression using ordinary least squares is no longer applicable.\par The second challenge arises from the underlying biological correlation structure. In brain imaging, it is common for a small number of PCs to capture a large portion of the variance in the explanatory variables (\cite{palmer_distinct_2020}). This indicates a strong correlation structure between the original voxel or vertex predictors. 
While the assumptions of weak correlation, which underlie many existing methods (\cite{sabuncu_morphometricity_2016};\cite{yang_gcta_2011}), have a significant impact on the consistency of FVE estimators.

Within the neuroimaging field the specific FVE attributable to brain morphology is often termed as ``morphometricity" \citep{sabuncu_morphometricity_2016}. Inspired by the term ``heritability" which measures the proportion of variance of a trait explained by genetics, morphometricity quantifies the proportion of variance of a trait explained by brain morphology. The brain morphology can include whole-brain measurements or targeted subsets such as specific regions of interest (ROIs), hemispheres with features like vertex-wise cortical and subcortical features. Conceptually, morphometricity aims to capture the total contribution of brain measurements, offering a comprehensive view in high-dimensional settings. However, accurately estimating it in the face of these previously detailed challenges remains a key area of investigation. In this paper, we adopt the broader term FVE to refer to this estimand in neuroimaging studies. 

To estimate the FVE, linear mixed-effects models (LMM) have been used, with their parameters typically estimated using maximum likelihood or restricted maximum likelihood (REML) (\cite{sabuncu_morphometricity_2016}). This LMM-REML approach is inspired by its use in genetics to estimate SNP-based heritability, notably within the Genome-wide Complex Trait Analysis (GCTA) framework \citep{yang_gcta_2011}. This success has encouraged us to adapt various heritability estimation methods from SNP data for brain image analysis. Unlike genetic studies, where data access can be limited by privacy concerns, brain imaging data are often more accessible. This accessibility allows for applying a wider range of algorithms to neuroimaging data, including those requiring individual-level data. As a result, in this paper, we explore algorithms based on both the method of moments (MoM) and restricted maximum likelihood (REML) in neuroimaging; MoM can often use summary statistics, while REML typically requires individual-level data (\cite{zhu_statistical_2020, tang_review_2022}). For instance, our investigation considers MoM approaches such as the GWAS heritability (GWASH) estimator (\cite{schwartzman_simple_2019}), alongside the LMM-REML approach.

Although much effort has been devoted to estimating FVE when $m \gg n$, complications from strong correlations among predictors have received far less attention. Strong correlation is particularly prominent in neuroimaging data, where spatially adjacent and far away features may be highly dependent. For example, in T1-weighted structural MRI-derived surface area data, this dependency can manifest as a data matrix with a dominant eigenvector (see Figure \ref{Figure1_intro}). Such strong correlations can substantially impact the consistency and bias of FVE estimators (\cite{krishna_kumar_limitations_2016, couvyduchesne_unified_2020}). While some methods account for local correlation patterns like linkage disequilibrium (LD) in genetics (\cite{the_ucleb_consortium_reevaluation_2017,hou_accurate_2019}), they do not adequately address the types of strong, diffuse correlation structures prominent in neuroimaging. At the summary statistics level, GWASH was proved to provide unbiased FVE estimates but requires a weak correlation assumption among covariates \citep{schwartzman_simple_2019, azriel_consistency_2025}. We suspect this weak correlation assumption may be necessary for LMM-REML too. While a common strategy in neuroimaging known as ``demeaning"—centering data by subtracting the mean—can remove global effects or subject-level biases, it does not address these underlying strong correlation structures among predictors (\cite{kriegeskorte_representational_2008, azriel_estimation_2020,  revsine_unifying_2024}). 
 
In this paper, we show that FVE estimates can be biased when predictors are strongly correlated because the weak-correlation assumption is violated. Since these strong effects are often captured by the leading PCs of the predictor correlation matrix, we propose an FVE decomposition framework where predictors are projected onto the first few PCs and the total FVE is estimated as a combination of the low-dimensional contribution from the leading components and the high-dimensional contribution from the remaining weakly correlated residual features. To justify the validity of the high-dimensional portion of the estimate, we show both theoretically and via simulations that the predictor covariance matrix exhibits weak correlation after removing dominant PCs representing strong diffuse effects. 

Because population eigenvectors are unavailable in practice and estimating eigenvectors and FVE from the same dataset can cause bias due to double use of the data, we propose to estimate the leading eigenvectors from surrogate data, that is, obtained from out-of-sample data presumed to be from the same
population. This idea is inspired by the use of genetic reference panels in FVE estimation in genetics \citep{schizophrenia_working_group_of_the_psychiatric_genomics_consortium_ld_2015,taliun_laser_2017}. Our findings indicate that using surrogate eigenvectors for FVE partitioning and estimation yields nearly unbiased and more robust estimates compared to direct applications of GWASH and LMM-REML. We use a data-driven rule motivated by Mar\v{c}enko--Pastur theory \citep{silverstein_analysis_1995} to choose the number of strong eigenvalues: the bulk right edge is estimated using Spectrode \citep{dobriban_efficient_2015} and the eigenvalues above that edge are counted as strong eigenvalues. We present FVE estimation simulation results from artificial datasets based on both one-block and two-block exchangeable correlation structures. The real data analysis shows that approximately $12$\% of the variability in crystallized intelligence can be explained by surface area data following FVE decomposition and estimation in the ABCD dataset.

\begin{figure} 
\includegraphics[width= \linewidth]{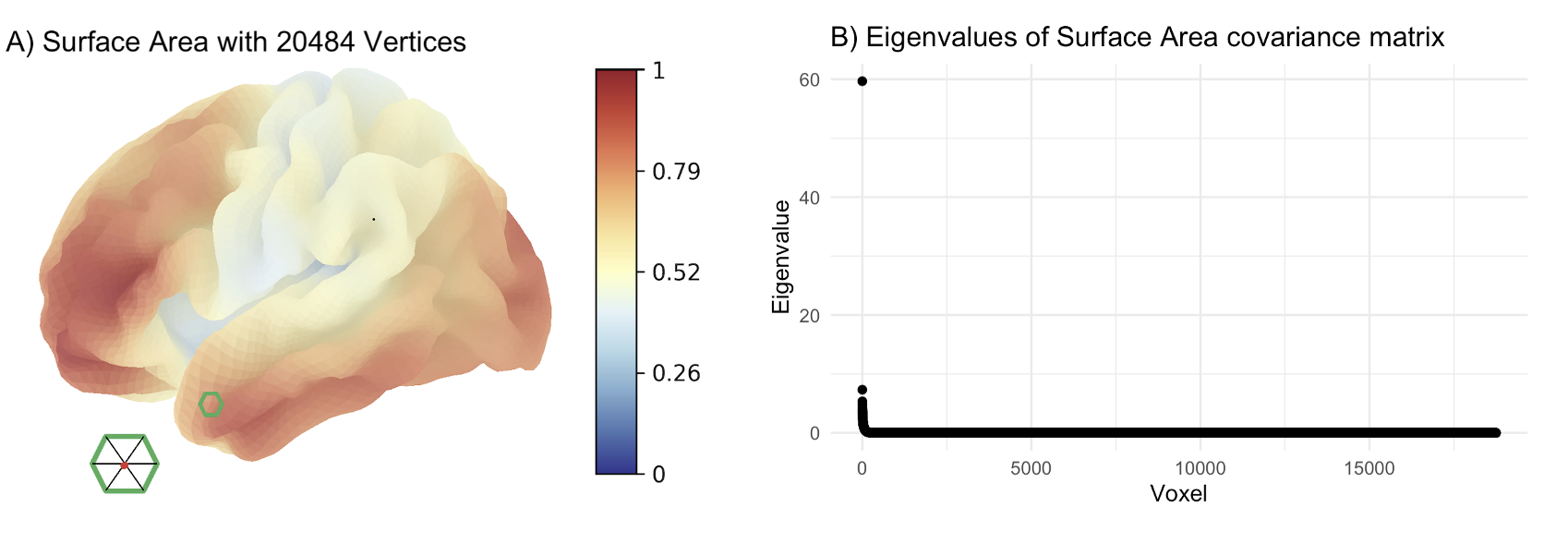} 
\caption{Panel A shows ABCD cortical surface area data in $mm^2$ consisting of 20,484 vertices from one participant at the baseline visit with the left hemisphere plotted. Panel B shows a plot of the eigenvalues of cortical surface area covariance matrix of ABCD data after removing the medial wall.}
\label{Figure1_intro}
\end{figure} 

\section{Definitions and background} 
\subsection{Fraction of Variance Explained}

    We consider the standard linear regression model 
\begin{equation}
\label{eq:1}
    y_i = \vec{\x}_i \boldsymbol{\beta} + \epsilon_i,\qquad i = 1,\ldots,n.
\end{equation}
 Here $y_i$ and $\vec{\x}_i = \left(x_{i1}, ..., x_{im}\right)$ denote the continuous response and  brain imaging predictors of the subject $i$, respectively. In this model, $\boldsymbol{\beta}$ is fixed and arbitrary, with no prespecified distribution. The predictors here represent vertex-wise cortical features, specifically local surface area, but they could represent other voxel-wise brain measurements. Model \eqref{eq:1} can also be written in matrix form as $\boldsymbol{y} = \boldsymbol{X} \boldsymbol{\beta} + \boldsymbol{\epsilon}$, where $\boldsymbol{X}$ is the design matrix with rows $\vec{\x}_i$, $\boldsymbol{y} = \left(y_1, ...,y_n\right)^\top$, $\boldsymbol{\beta} = \left(\beta_{1}, ..., \beta_{m}\right)^\top$, and $\boldsymbol{\epsilon} = \left(\epsilon_1,..., \epsilon_n\right)^\top$.

 For simplicity and without loss of generality, we assume that the vector $\boldsymbol{y}$ and the rows $\boldsymbol{\vec{x}_i}$ have been centered in the sample by subtracting the column-wise average so that $\sum_{i=1}^n {y_i}=0$ and $\sum_{i=1}^n x_{ij}=0$ for all $j=1,\ldots,m$.
 This centering makes model \eqref{eq:1} have no intercept term. Other fixed covariates, such as age, gender, race, and ethnicity, are not explicitly included in our model. Instead, we assume that these variables have already been regressed out in a preliminary regression model. Each participant $i$ is assumed to be randomly sampled from the population together with the associated response $y_i$ and  the brain measurement vertices $\boldsymbol{\vec{x}_i}$. The error terms $\epsilon_i$ are independent of $\boldsymbol{\vec{x}_i}$ with mean 0 and variance $\sigma_{\epsilon}^2$. 
 
 The $m \times m$ covariance matrix $\Sigma = \Cov(\vec{x}_i)$ captures the pairwise covariance structure between the measured values at the cortical vertices in the underlying population. The associated correlation matrix reflects the marginal Pearson correlations between these vertices.  For T1-weighted structural MRI-derived surface area data, this matrix typically exhibits a dominant eigenvector, a property that we leverage in our method (Figure \ref{Figure1_intro}).

We wish to estimate FVE by the predictors in Model \eqref{eq:1}. The variance decomposition is $$\Var\left(y_i|\boldsymbol{\beta}\right) = \E\left(\boldsymbol{\beta}^\top \boldsymbol{\vec{x}_i}^\top \boldsymbol{\vec{x}_i} \boldsymbol{\beta} | \boldsymbol{\beta}\right)  + \E\left(\epsilon_i^2\right)  = \boldsymbol{\beta}^\top \Sigma \boldsymbol{\beta} + \sigma^2.$$ 
Our main quantity of interest is $h^2_{\beta}$, the fixed effect FVE conditional on $\beta$:

\begin{equation}
    \label{eq:h2beta}
    h^2_{\beta} = \frac{\Var\left(\boldsymbol{\vec{x}_i}\boldsymbol{\beta}|\boldsymbol{\beta}\right)}{\Var\left(y_i | \boldsymbol{\beta}\right)} = \frac{\boldsymbol{\beta}^\top \Sigma \boldsymbol{\beta}}{ \boldsymbol{\beta}^\top \Sigma \boldsymbol{\beta} + \sigma^2}.
\end{equation} In comparison, when treating Model \eqref{eq:1} as a random-effects model \citep{azriel_consistency_2025}, the FVE is defined as the random-effects heritability $h^2$:
\begin{equation}
    \label{eq:h2}
    h^2 = \frac{\Var[\boldsymbol{\vec{x}_i}\boldsymbol{\beta}]}{\Var[y_i ]}.
\end{equation} In \cite{schizophrenia_working_group_of_the_psychiatric_genomics_consortium_ld_2015}, $\boldsymbol{\beta}$ is modeled as random: $\beta_1,...,\beta_m$ are assumed i.i.d. with mean zero and independent of $X$ and $\epsilon$. Letting $\Var(y_i) = \sigma^2_y$, $\Var\left(\beta_i\right) = (\sigma^2_yh^2)/(m\Var (x_i))$ one can obtain the random-effects FVE $h^2$ by averaging the numerator and the denominator in Equation~\eqref{eq:h2beta} separately \citep{azriel_consistency_2025}:
\begin{equation*}
    \frac{\E(\boldsymbol{\beta}^\top \Sigma \boldsymbol{\beta})}{ E(\boldsymbol{\beta}^\top \Sigma \boldsymbol{\beta}) + \Var(\epsilon_i)} = \frac{\sigma^2_yh^2}{\sigma^2_yh^2 + \sigma^2_y(1 - h^2)} = h^2.
\end{equation*}

As $\boldsymbol{\beta}$ reflects the underlying unknown biological mechanisms in a given population described by model \eqref{eq:1}, we assume that the parameter $\boldsymbol{\beta}$ is fixed and arbitrary with no prespecified distribution. 
We use the random-effects FVE $h^2$~\eqref{eq:h2} to set the residual variance in the simulation. Because $\boldsymbol{\beta}$ is held fixed across replicates, the relevant true FVE is the fixed-effect FVE $h^2_{\beta}$ against which we evaluate performance.

\subsection{The essential conditions}

Theorem 1 of \citet{azriel_consistency_2025} shows that the fixed-effect FVE \(h_\beta^2\) in \eqref{eq:h2beta} and the random-effects FVE \(h^2\) in \eqref{eq:h2} are asymptotically equivalent under two conditions. The first condition is that the coefficients are of comparable order of magnitude. Since this is not the focus of this paper, we assume that this condition holds as it does for the random-effects model described above. In this paper, we focus on the second condition:
\begin{itemize}
    \item \(\mathrm{WD}_0\) (weak dependence): \(\operatorname{tr}(\widetilde{\Sigma}^2)/m^2 \to 0\) as \(m \to \infty\), \quad where $\widetilde{\Sigma}$ is standardized $\Sigma$.
\end{itemize}
 
As examples, ARMA processes satisfy weak dependence, whereas exchangeable covariance structures violate it. More details can be found in Section 2 of \cite{azriel_empirical_2015}. When the covariance condition $\mathrm{WD}_0$ is not fulfilled, the estimator of $h_{\beta}^2$ based on the sample $y_i's$ and the predictors $\boldsymbol{\vec{x}_i}'s$ becomes biased, as we will show in Section~\ref{section_simulation_result}, in which case $h_{\beta}^2$ and $h^2$ are not asymptotically equivalent. Therefore in the following discussion, we use the fixed-effect FVE $h^2_{\beta}$ as the true FVE when we evaluate simulation performance. 

High-dimensional FVE estimators typically do not perform well under strong correlation \citep{pham_when_2026}. To obtain an unbiased estimator in the presence of strong 
correlations, we decompose the design vector $\vec{x}$ into a low-rank (strong-correlation) component and a high-dimensional (weak-correlation) component, as described in Section~\ref{sec:predictor_decomposition} below.

\subsection{Adjusted $R^2$}

In our proposed FVE decomposition framework, the first portion of the FVE will be estimated in a low-dimensional linear model. This can be estimated using adjusted $R^2$. In classical regression settings with relatively few predictors, the coefficient of determination $R^2$ is a well-established metric for evaluating how much variance in the outcome is explained by the model. However, $R^2$ can be overly optimistic as more predictors are added even if those predictors contribute little real explanatory power. 

The adjusted $R^2$ addresses this shortcoming by introducing a penalty for each additional predictor, thereby preventing spurious improvements in model fit. One of the most widely known and implemented estimators adjusted $R^2$ was developed by \cite{ezekiel1930methods}
\begin{equation}
    R_{adj}^2 
= 1 - \frac{\left(1 - R^2\right)\,\left(n - 1\right)}{n - m - 1}, \label{eq_adjr}
\end{equation}
which requires \(m < n-1\). The Ezekiel adjusted \(R^2\) was shown to be unbiased only if the population multiple correlation is zero \citep{alf_new_2002}.

Wishart (\cite{wishart1931mean}) derived his formula for the variance of $R^2$ using a hypergeometric series, a method that has also been employed in subsequent literature, including the work of Olkin and Finn (\cite{olkin_unbiased_1958}; \cite{stuart2010kendall}; 
\cite{momin_significance_2023}): the estimator of the variance of $R^2$ is given by the following formula
\[
    \widehat{\Var}\left(R^2\right) = \frac{4R^2 \left(1 - R^2\right)^2 \{n - \left(m + 1\right)\}^2}{\left(n^2 - 1\right)\left(n+3\right)}.
\]
Since the adjusted $R^2$ can be written as a linear transformation of $R^2$ in \eqref{eq_adjr}, we can obtain the estimator of the variance of adjusted $R^2$ as
\begin{equation}
\label{adj_var}
     \widehat{\Var}\left(R_{adj}^2\right) = \frac{4R^2 \left(1 - R^2\right)^2 \left(n - 1\right)^2}{\left(n^2 - 1\right)\left(n+3\right)}.
\end{equation}

\subsection{LMM-REML}

Linear mixed-effects models (LMM) with parameters estimated using restricted maximum likelihood (REML) are commonly used to estimate FVE both in genetics, often referred to there as Genomic Restricted Maximum Likelihood (GREML) \citep{yang_common_2010}, or in neuroimaging analysis \citep{sabuncu_morphometricity_2016, couvyduchesne_unified_2020,furtjes_quantified_2023}. Implemented in the software package GCTA (Genome-wide Complex Trait Analysis), it is a powerful statistical method to estimate the FVE by genetic or other high-dimensional data (\cite{yang_gcta_2011}). Unlike traditional regression models that estimate individual predictor effects, LMM-REML leverages a mixed-effects model framework and treats the predictor effects (e.g., SNPs, voxels) as random effects, with a variance-covariance structure defined by a relatedness matrix (e.g., a genetic relatedness matrix, GRM). This matrix captures the pairwise similarity between individuals based on the predictors. 

The core assumption of LMM-REML is that the effects of the predictors are drawn from a multivariate normal distribution with a mean of zero and a covariance structure proportional to the relatedness matrix. Furthermore, LMM-REML assumes that the relatedness matrix accurately reflects the true underlying relationships among individuals and the influence of those factors on the trait, meaning that the matrix design is adequate. Departures from these assumptions, such as strong non-additive genetic effects, unmodeled population structure, or the presence of rare variants not well-captured by the relatedness matrix, can lead to biased estimates of FVE (\cite{krishna_kumar_limitations_2016, the_ucleb_consortium_reevaluation_2017}).

\subsection{GWASH}

An alternative estimator for FVE in high-dimensional settings is GWASH \citep{schwartzman_simple_2019}, which uses summary statistics. To write down this estimator, we first replace $\boldsymbol{X}$ and $\boldsymbol{y}$ by their column-wise standardized versions $\widetilde{\boldsymbol{X}}$ and $\widetilde{\boldsymbol{y}}$. The sample correlation scores are defined as
\[u_j = \frac{\widetilde{x}^\top_j \widetilde{y}}{\sqrt{n -1}} = \sqrt{n -1} \frac{x_j^\top y}{ ||x_j|| ||y||},\qquad j = 1,\ldots,m,\]
where $\widetilde{x}_j$ and $x_j$ are the $j$-th column of  $\widetilde{\boldsymbol{X}}$ and  ${\boldsymbol{X}}$, respectively.
The GWASH estimator is given by
\begin{equation*}
    \hat{h}^2_{ \text{GWASH}} = \frac{m}{n \hat{\mu}_2} \left(s^2 - 1\right),
\end{equation*}
where $s^2$ is the empirical second moment of the correlation scores $s^2 = ||\boldsymbol{u}||^2/m = \sum_{j = 1} ^m u^2_j/m$. The second spectral moment is:  
\begin{equation}
\label{eq:mu2}
    \mu_2 = \tr\left(\widetilde{\Sigma}^2\right)/m.
\end{equation} 
 And $\hat{\mu}_2$ is the estimation of $\mu_2$:
 \begin{equation}
\label{eq:mu2_hat}
    \hat{\mu}_2 = \frac{1}{m} \tr \left( \widetilde{S}^2 \right) - \frac{m-1}{n-1},
\end{equation} 
where $\widetilde{S}$ is the sample covariance matrix of $\widetilde{\Sigma} = \Cor(\vec{\bf x}_i)$ and is given by
\begin{equation*}
\label{eq:S_stdX}
    \widetilde{S} = \frac{1}{n - 1} \widetilde{\boldsymbol{X}}^T\widetilde{\boldsymbol{X}}.
\end{equation*}

Under certain assumptions,
the asymptotic distribution of $\hat{h}^2_{\text{GWASH}}$ is normal when $m$ and $n$ increase so that the ratio between $m$ and $n$ converges to a constant, \[\sqrt{n} \left(\hat{h}^2_{ \text{GWASH}} -h^2\right)/\psi \quad {\to}\quad N\left(0, 1\right),\]where $$\frac{\psi^2}{n} = \frac{2}{n}\left(\frac{m}{n \mu_2}+ 2\frac{\mu_3}{\mu_2^2}h^2 - h^4\right),$$ and $\mu_3 =  \tr\left(\widetilde{\Sigma}^3\right)/m$ \citep{schwartzman_simple_2019}. An estimate of the standard error of $h^2_{\text{GWASH}}$ is $\sqrt{\hat{\psi}^2/n}$, where 
\begin{equation}
\label{gwash_var}
    {\hat{\psi}^2}  = 2\left(\frac{m}{n \hat{\mu}_2}+ 2\frac{\hat{\mu}_3}{\hat{\mu}_2^2}\hat{h}^2_{\text{GWASH}} - \hat{h}_{\text{GWASH}}^4\right),
\end{equation} and $\hat{\mu}_2$ is given in \eqref{eq:mu2_hat} and \[\hat{\mu}_3 = \frac{1}{m} \tr\left( \widetilde{S}^3 \right) - 3\frac{m-1}{n-1} \hat{\mu}_2 - \frac{(m-1)(m-2)}{\left( n-1\right)^2}. \] 

\section{Methods}

We first describe the decomposition of the predictors and the correlation structures that motivate removing leading PCs. We then use this predictor decomposition to rewrite the response model and derive the corresponding FVE decomposition and estimation procedure.

\subsection{Predictor decomposition}
\label{sec:predictor_decomposition}
Let $\Sigma = V \Lambda V^\top$ be the eigen-decomposition of the covariance 
matrix, and partition the eigenstructure as
\begin{equation*}
\label{sigma_decomposition}
    \Sigma = \Sigma_{k} + \Sigma_{\cdot k} = V_k \Lambda_k V_k^\top + V_{.k} \Lambda_{.k} V_{.k}^\top,
\end{equation*}
where $\Lambda = \operatorname{diag}(\lambda_{1},\ldots,\lambda_{m}), \quad \lambda_{1}\ge\cdots\ge\lambda_{m}\ge 0
$ are the eigenvalues and $V_k$ collects the first $k$ eigenvectors corresponding to the largest 
eigenvalues, with $V_{.k}$ containing the remaining $m-k$ eigenvectors. 

We define the orthogonal projections
\begin{equation}
\label{eq:predictor_decomposition}
\vec{\x}_k = \vec{\x} V_k V_k^\top, \qquad
\vec{\x}_{.k} = \vec{\x} V_{.k} V_{.k}^\top,\qquad
\vec{\x} = \vec{\x}_k + \vec{\x}_{.k}.
\end{equation}
The corresponding transformed predictor coordinates are
\[
\vec{\w}_k = \vec{\x} V_k, 
\qquad  
\vec{\w}_{.k} = \vec{\x} V_{.k}.
\]
Since $V_k^\top V_{.k} = 0$, the covariance matrix of 
$(\vec{\w}_k,\vec{\w}_{.k})$ is diagonal:
\begin{equation}
\label{eq:block-independent}
    \E
\begin{bmatrix}
\vec{\w}_k^\top \\[2pt] 
\vec{\w}_{.k}^\top
\end{bmatrix}
\begin{bmatrix}
\vec{\w}_k & \vec{\w}_{.k}
\end{bmatrix}
=
\begin{bmatrix}
\Lambda_k & 0 \\
0 & \Lambda_{.k}
\end{bmatrix},
\end{equation}
the two transformed predictor blocks are uncorrelated. This separates the predictor space into a leading $k$-dimensional component and its orthogonal complement.

\subsection{The number of strong eigenvalues $K$}

To determine the number of ``strong'' eigenvalues that are of order $m$, we follow the work of \citet{azriel_empirical_2015}.
Define the number of strong eigenvalues by
\[
K \;=\; \sum_{i\ge 1} \mathbf{1}\!\left\{\lim_{m\to\infty}\frac{\lambda_i}{m}>0\right\},
\]
and assume $K<\infty$. We call the remaining eigenvalues the bulk. Throughout, we also assume the bulk is uniformly bounded, i.e.,
\begin{equation}
\label{eq:bulk_bound}
\sup_{m}\,\max_{i>K}\lambda_i \leq C < \infty.
\end{equation}

We first establish that removing a finite sufficient number of population-level
``strong'' PCs eliminates strong global dependence and
restores weak correlation in the residual covariance. Intuitively, when only finitely many eigenvalues grow proportionally with dimension, projecting the data onto the orthogonal
complement of the corresponding eigenvectors removes the dominant
low-rank structure, leaving a residual covariance that satisfies the weak-correlation condition.

Let $V_k\in\mathbb{R}^{m\times k}$ collect the top $k$ orthonormal eigenvectors of $\Sigma$, and define the row-residual by right projection
\[
\vec{\x}_{\cdot k}\;=\; \vec{\x}\,(I - V_ kV_ k^\top)\in \mathbb{R}^{1\times m}.
\]
Then the residual covariance equals
\begin{equation}
\label{eq:residual_covariance_matrix}
    \Sigma_{\cdot k}\;:=\;\operatorname{Cov}(\vec{\x}_{\cdot k}^\top)
\;=\;(I - V_ k V_ k^\top)\,\Sigma\,(I - V_ k V_ k^\top).
\end{equation}
Define $D_{\cdot k}:=\diag(\Sigma_{\cdot k})$
and the re-standardized residual correlation matrix
\begin{equation}
\label{eq:residual_correlation_matrix}
    \tilde{\Sigma}_{\cdot k}
    :=
    D_{\cdot k}^{-1/2}
    \Sigma_{\cdot k}
    D_{\cdot k}^{-1/2}
    =
    \Cor(\vec{\x}_{\cdot k}).
\end{equation}
Assume that the residual variances are uniformly bounded away from zero:
\begin{equation}
\label{eq:residual_variances_bounded}
\inf_m \min_{1\le j\le m}
    (\Sigma_{\cdot k})_{jj}
    \ge c_{k}>0.
\end{equation}

\begin{proposition}
[Population PC removal with at least $k \ge K$ strong factors]\label{prop:pop-AS} 
The residual correlation matrix \eqref{eq:residual_correlation_matrix} satisfies the weak-correlation condition
\begin{equation}
\label{eq:weak_cor}
    \frac{1}{m^2}\,\tr\!\big(\widetilde{\Sigma}_{\cdot k}^{\,2}\big)\ \longrightarrow\ 0\qquad (m\to\infty).
\end{equation}
\end{proposition}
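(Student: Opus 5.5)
The plan is to bound the trace of the squared residual correlation matrix by the trace of the squared residual covariance matrix, and then use the spectral structure of $\Sigma_{\cdot k}$ directly. Since $V_k$ collects the top $k$ eigenvectors of $\Sigma$, the projection $P:=I-V_kV_k^\top$ commutes with $\Sigma$. Hence \eqref{eq:residual_covariance_matrix} simplifies to
\[
\Sigma_{\cdot k}=P\Sigma P=V_{\cdot k}\Lambda_{\cdot k}V_{\cdot k}^\top .
\]
This matrix is positive semidefinite with eigenvalues $\lambda_{k+1},\ldots,\lambda_m$ together with $k$ zeros. It follows that
\[
\tr(\Sigma_{\cdot k}^2)=\sum_{i>k}\lambda_i^2 .
\]

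\textbf{Step 1 (eigenvalue bound).} Because $k\ge K$, every index $i>k$ lies in the bulk, so the bound \eqref{eq:bulk_bound} gives $\lambda_i\le C$ for all $i>k$. Therefore
\[
\tr(\Sigma_{\cdot k}^2)\le (m-k)C^2\le mC^2 .
\]
Some care is needed about how $K$ and the ordering of eigenvalues are defined as $m$ varies, since the eigenvalues form a triangular array. I would interpret the setup as saying that, for all large $m$, the top $K$ eigenvalues are the strong ones and the remaining eigenvalues are bounded by $C$. This is the natural reading of \eqref{eq:bulk_bound}, and I would state it explicitly. If $k>K$, the extra removed eigenvalues are bulk eigenvalues, so removing them only helps.

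\textbf{Step 2 (passing to the correlation matrix).} For $\widetilde\Sigma_{\cdot k}=D^{-1/2}\Sigma_{\cdot k}D^{-1/2}$, expand the trace entrywise:
\[
\tr(\widetilde\Sigma_{\cdot k}^2)=\sum_{j,l}\frac{(\Sigma_{\cdot k})_{jl}^2}{(\Sigma_{\cdot k})_{jj}(\Sigma_{\cdot k})_{ll}}\le c_k^{-2}\sum_{j,l}(\Sigma_{\cdot k})_{jl}^2=c_k^{-2}\tr(\Sigma_{\cdot k}^2).
\]
The inequality uses \eqref{eq:residual_variances_bounded}. Combining this with Step 1 gives
\[
\frac{1}{m^2}\tr(\widetilde\Sigma_{\cdot k}^2)\le \frac{C^2}{c_k^2\,m}\to 0,
\]
which is \eqref{eq:weak_cor}. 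In fact the rate is $O(1/m)$, much stronger than required.

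\textbf{Expected obstacle.} The main obstacle is not the computation but the bookkeeping in Step 1. One must make sure that the bulk bound applies to every eigenvalue not removed, which requires $k\ge K$ and the uniform-in-$m$ interpretation of \eqref{eq:bulk_bound}. One must also confirm that $P$ commutes with $\Sigma$, which holds because $V_k$ consists of exact population eigenvectors; this is precisely what fails for surrogate eigenvectors. If the bulk bound were only assumed in a weaker averaged form, the fallback would be to require $\sum_{i>k}\lambda_i^2=o(m^2)$. The same Step 2 argument then still goes through.
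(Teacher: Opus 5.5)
Your proposal is correct and follows the paper's proof essentially step for step: you identify the nonzero eigenvalues of $\Sigma_{\cdot k}$ as the bulk eigenvalues $\lambda_{k+1},\ldots,\lambda_m\le C$, bound $\tr(\widetilde{\Sigma}_{\cdot k}^{2})\le c_k^{-2}\tr(\Sigma_{\cdot k}^{2})\le m(C/c_k)^2$, and divide by $m^2$. Your entrywise justification of the middle inequality, via $\sum_{j,l}(\Sigma_{\cdot k})_{jl}^2/\{(\Sigma_{\cdot k})_{jj}(\Sigma_{\cdot k})_{ll}\}$, is if anything cleaner than the paper's appeal to $D_{\cdot k}^{-1}\le c_k^{-1}I_m$.
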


In practice, the population eigenvectors are unknown, and the number of strong
components must be inferred from the data. We adopt a practical data-driven method, motivated by Mar\v{c}enko--Pastur (MP) theory, to choose $\hat{k}$ in Section~\ref{sec:edge_algorithm}.

\subsection{Examples of strong correlation structures}
\label{section_example_derivation}
The following examples illustrate how strong components arise in simple correlation structures and how the residual covariance behaves after the leading components are removed. Details of the derivation can be found in Appendix~\ref{appendix:derivation_for_example}.

\paragraph{One-block exchangeable correlation.}

Let $\vec{x}$ be exchangeable with entries satisfying $$ \Cor(x_i, x_j) = \rho \quad (i \neq j), \quad \rho \geq 0,$$ so that the correlation matrix is 
\begin{equation}
\label{eqone-exchangable}
    \Sigma = \begin{bmatrix} 1 & \rho & \cdots & \rho \\ \rho & 1 & \cdots & \rho \\ \vdots & \vdots & \ddots & \vdots \\ \rho & \rho &\cdots & 1 \end{bmatrix}.
\end{equation}
The eigenvalues of $\Sigma$ are $$\lambda_{1} = 1 + (m - 1)\rho \quad \text{and} \quad \lambda_{2} = \dots = \lambda_{m} = 1 - \rho,$$ with leading eigenvector proportional to $\mathbf{1}_m$ as $(1,\dots,1)^\top /\sqrt{m}$. After standardization and for fixed $\rho>0$, the weak-dependence condition $\mathrm{WD}_0$ fails since
\begin{equation}
\label{eq:one_block_nopcremove_strong}
    \frac{1}{m^2}\tr(\widetilde{\Sigma}^2)
= \frac{1}{m^2} \sum_{i = 1}^m \lambda_{i}^2
\longrightarrow \rho^2>0.
\end{equation}

However, removing the population leading PC following \eqref{eq:residual_covariance_matrix} in Proposition~\ref{prop:pop-AS}, where $K=1$ and $V_1V_1^\top=\frac{1}{m}\mathbf 1_m\mathbf 1_m^\top$,
yields the residual covariance
\[
\Sigma_{\cdot 1}=(1-\rho)(I_m-V_1V_1^\top),
\]
because $(I-V_1V_1^\top)\mathbf 1_m=0$. After removing the first PC, the re-standardized residual correlation matrix in Equation~\eqref{eq:residual_correlation_matrix} is:
\[
    \widetilde{\Sigma}_{\cdot 1}
    :=
    D_{\cdot 1}^{-1/2}
    \Sigma_{\cdot 1}
    D_{\cdot 1}^{-1/2}
    =
    \frac{m}{m-1}
    \left(
    I_m-\frac{1}{m}\mathbf 1_m\mathbf 1_m^\top
    \right).
\] Therefore $\widetilde{\Sigma}_{\cdot 1}$ has eigenvalues $0$ and
$m/(m-1)$, the latter with multiplicity $m-1$. Hence the second spectral
moment \eqref{eq:mu2} converges to $1$ as $m \to \infty$:
\begin{equation*}
\label{eq_second_spectral_moment_oneblock}
    \mu_2 =\frac{1}{m}\tr(\widetilde{\Sigma}_{\cdot 1}^2)=\frac{m}{m-1}\to1.  
\end{equation*}
Therefore, following the definition in \eqref{eq:weak_cor},
\[
\frac{1}{m^2}\tr(\widetilde{\Sigma}_{\cdot 1}^2)=\frac{1}{m-1}\longrightarrow 0,
\]
verifying Proposition~\ref{prop:pop-AS} in this model. 

\paragraph{Two-block exchangeable correlation.} The number of leading PCs with strong eigenvalues can exceed one. Consider a two-block structure of the form \eqref{eqone-exchangable} with block sizes $m_1+m_2=m$, within-block
correlations $\rho_1,\rho_2$, and between-block correlation $\rho_B$. 
We write the full correlation matrix as
\begin{equation}
\label{eqtwo-exchangable}
\Sigma = 
\begin{bmatrix}
(1-\rho_1)I_{m_1}+\rho_1\, \mathbf{1}_{m_1}\mathbf{1}_{m_1}^\top & \rho_B\, \mathbf{1}_{m_1}\mathbf{1}_{m_2}^\top\\[4pt]
\rho_B\, \mathbf{1}_{m_2}\mathbf{1}_{m_1}^\top &
(1-\rho_2)I_{m_2}+\rho_2\, \mathbf{1}_{m_2}\mathbf{1}_{m_2}^\top
\end{bmatrix}.
\end{equation}
Following the two-block eigen-decomposition in \citet{azriel_empirical_2015}, define
\[
u_1=
\begin{bmatrix}\mathbf{1}_{m_1}/\sqrt{m_1}\\ \mathbf{0}_{m_2}\end{bmatrix},
\qquad
u_2=
\begin{bmatrix}\mathbf{0}_{m_1}\\ \mathbf{1}_{m_2}/\sqrt{m_2}\end{bmatrix},
\]
and let
\begin{equation}
\label{eq:eigenspace_two_block}
    S=\operatorname{span}\{u_1,u_2\}\subset\mathbb{R}^{m_1+m_2}.
\end{equation}

In the orthonormal basis $\{u_1,u_2\}$, the restriction of $\Sigma$ to $S$ is represented by
\begin{equation*}
    M_m:=
\begin{bmatrix}
a_1 & \sqrt{m_1m_2}\,\rho_B\\
\sqrt{m_1m_2}\,\rho_B & a_2
\end{bmatrix},
\end{equation*}
where
\[
a_1=1+(m_1-1)\rho_1,
\qquad
a_2=1+(m_2-1)\rho_2.
\]
Thus the two strong eigenvalues are
\[
\lambda_{\pm}(M_m)
=
\frac{a_1+a_2}{2}
\pm
\left\{
\left(\frac{a_1-a_2}{2}\right)^2
+
m_1m_2\rho_B^2
\right\}^{1/2}.
\]The remaining eigenvalues of $\Sigma$ are \(1-\rho_1\) with multiplicity \(m_1-1\) and \(1-\rho_2\) with multiplicity \(m_2-1\).

We consider the following asymptotic cases. Suppose first that $\rho_1$ and $\rho_2$ are fixed positive constants, and that $m_j/m\to\pi_j$ for $j=1,2$. After standardization when no PC is removed, the weak-dependence condition $\mathrm{WD}_0$ fails:
\begin{equation}
\label{eq:two_block_nopcremove_strong}
\frac{1}{m^2}\tr(\widetilde{\Sigma}^2)
\longrightarrow
\pi_1^2\rho_1^2
+
\pi_2^2\rho_2^2
+
2\pi_1\pi_2\rho_B^2.
\end{equation}If $\pi_1,\pi_2>0$ and $\rho_B^2<\rho_1\rho_2$, then both eigenvalues $\lambda_+(M_m)$ and $\lambda_-(M_m)$ are of order $m$. The standardized residual correlation matrix in Equation~\eqref{eq:residual_correlation_matrix} after removing the top
two population PCs is
\begin{align*}
\widetilde{\Sigma}_{\cdot 2}
&=
D_{\cdot 2}^{-1/2}
\Sigma_{\cdot 2}
D_{\cdot 2}^{-1/2} \notag\\
&=
\begin{bmatrix}
\frac{m_1}{m_1-1}
\left(
I_{m_1}
-
\frac{1}{m_1}
\mathbf 1_{m_1}\mathbf 1_{m_1}^\top
\right)
& 0\\[6pt]
0 &
\frac{m_2}{m_2-1}
\left(
I_{m_2}
-
\frac{1}{m_2}
\mathbf 1_{m_2}\mathbf 1_{m_2}^\top
\right)
\end{bmatrix}.
\end{align*}Removing the top $K=2$ population PCs eliminates the entire $O(m)$ part of the spectrum and leaves only bounded eigenvalues $\{1-\rho_1,1-\rho_2\}$ and hence according to Proposition~\ref{prop:pop-AS}:
\[\frac{1}{m^2}
    \tr\!\left(\widetilde{\Sigma}_{\cdot 2}^{\,2}\right)
    =
    \frac{
    \frac{m_1^2}{m_1-1}
    +
    \frac{m_2^2}{m_2-1}
    }{m^2} = \frac{m + O(1)}{m^2} = O\left(\frac{1}{m}\right)
    \longrightarrow 0.
\]If $\pi_1,\pi_2>0$ but $\rho_B^2=\rho_1\rho_2$, then the leading $O(m)$ part of $M_m$ has rank one. Consequently, only $\lambda_+(M_m)$ is of order $m$, while $\lambda_-(M_m)$ is bounded, and hence there is only one strong eigenvalue and $K = 1$. If one block has size $o(m)$ while the other block has size of order $m$ with positive within-block correlation, then again only one eigenvalue is of order $m$ and $K = 1$. 
\subsection{Estimating the number of large eigenvalues $K$}
\label{sec:edge_algorithm}

To estimate the number of large eigenvalues $K$ in real data, we use the Mar\v{c}enko--Pastur law in random matrix theory \citep{marcenko_distribution_1967}. In their nomenclature, the word `bulk' refers to the main cluster of bounded population eigenvalues of $\Sigma$, and
the word `spikes' refers to a small number of larger separated eigenvalues. If the empirical
distribution of the bulk eigenvalues converges to a law $H$, then the sample bulk
spectrum is described by the Mar\v{c}enko--Pastur law $F_{\gamma,H}$ \citep{silverstein_analysis_1995,bai_spectral_2010}. 

Our goal is to construct a working cutoff for the empirical bulk, so that any eigenvalue above this cutoff is flagged as a candidate strong component. We use Spectrode \citep{dobriban_efficient_2015} to carry out this calculation numerically. Spectrode computes the support and density of the limiting spectral distribution; in this paper, we use only its support calculation and extract the rightmost bulk endpoint and denote this
edge by $\hat\nu$. We then define our data-driven choice for the number of strong eigenvectors $\hat{K}$ as:
\[
\hat K
=
\#\bigl\{j:\hat\lambda_j>\hat\nu\bigr\}.
\]

In Figure~\ref{fig:sim_twoblock_result}, we show the comparison of unit MP spectral distribution (solid green line) and the distribution of the eigenvalues after rescaling by the estimated right edge $\hat\nu$ from Spectrode. Exactly one and two eigenvalues are separated from the bulk and exceed $\hat\nu$, leading to $\hat K=1$ and $\hat K=2$ under the one-block and two-block exchangeable correlation structure respectively. In these examples, the procedure correctly distinguishes the strong eigenvalue components from the empirical bulk spectrum.

\subsection{FVE decomposition}
\label{sec:FVE_decomposition_theorem}
We now return to the linear model \eqref{eq:1} to describe the FVE decomposition. Define the coefficient coordinates in the coordinate frame of the first k eigenvectors:
\[
\bgamma_k = V_k^\top \bbeta,
\qquad
\bgamma_{.k} = V_{.k}^\top \bbeta.
\]
Substituting the predictor decomposition in Equation \eqref{eq:predictor_decomposition} into model \eqref{eq:1} yields two equivalent representations:
\begin{equation*}
y = \vec{\x}_k \bbeta + \vec{\x}_{.k} \bbeta + \epsilon,
\end{equation*}
\begin{equation}
y = \vec{\w}_k \bgamma_k + \vec{\w}_{.k} \bgamma_{.k} + \epsilon. \label{eqform2}
\end{equation}
Since the two transformed predictor blocks are uncorrelated as shown in~\eqref{eq:block-independent}, we can rewrite the model \eqref{eq:1} as a two-stage model with the projected representations \eqref{eqform2}:
\begin{enumerate}
    \item \textbf{First stage (strong effects):}
    \[
        y = \vec{\x}_k \bbeta+ \eta = \vec{\w}_k \bgamma_k + \eta, \quad \text{where } \eta = \vec{\w}_{.k} \bgamma_{.k} + \epsilon.
    \]
    \item \textbf{Second stage (weak effects):}
    \[
        y_{.k} = \eta = y - \vec{\w}_k \bgamma_k = \vec{\w}_{.k} \bgamma_{.k} + \epsilon.
    \]
\end{enumerate}We define the leading-PC contribution to the total FVE as
\begin{equation}
\label{eq:hk_def}
h_k^2 \;:=\; \frac{\Var(\vec{\w}_k\bgamma_k)}{\Var(y)},
\end{equation}
and the residual FVE for the second-stage model as
\begin{equation}
\label{eq:hdotk_def}
h_{.k}^2 \;:=\; \frac{\Var(\vec{\w}_{.k}\bgamma_{.k})}{\Var(y_{.k})}.
\end{equation}
According to these definitions, $h_k^2$ quantifies the FVE in $y$ explained by the top-$k$ PC subspace, while $h_{.k}^2$ quantifies the FVE in the residualized outcome $y_{.k}$ explained by the remaining $m-k$ directions. The following proposition shows how the total FVE $h^2_\beta$, defined in Equation~\eqref{eq:h2beta} can be written as a combination of these two partial FVE components.

\begin{proposition}[FVE decomposition]\label{prop:pc_fve_decomp}
Under \eqref{eq:1} and the definitions above, the total fixed-effect FVE
$h_\beta^2$ in \eqref{eq:h2beta} decomposes as
\begin{equation}\label{eq:pc_fve_identity}
h_\beta^2 \;=\; h_k^2 + (1-h_k^2)\,h_{.k}^2.
\end{equation}
\end{proposition}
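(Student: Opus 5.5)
The identity is an algebraic consequence of the block-uncorrelatedness in \eqref{eq:block-independent}. The plan is to express every variance in terms of $\bgamma_k$, $\bgamma_{.k}$, $\Lambda_k$, $\Lambda_{.k}$ and $\sigma^2$, conditioning on $\bbeta$ throughout as in \eqref{eq:h2beta}. Since $V=[V_k\;V_{.k}]$ is orthogonal, $\bbeta = V_k\bgamma_k + V_{.k}\bgamma_{.k}$. Hence
\[
\bbeta^\top\Sigma\bbeta = \bgamma_k^\top\Lambda_k\bgamma_k + \bgamma_{.k}^\top\Lambda_{.k}\bgamma_{.k}.
\]
By \eqref{eq:block-independent}, $\Var(\vec{\w}_k\bgamma_k)=\bgamma_k^\top\Lambda_k\bgamma_k=:A$ and $\Var(\vec{\w}_{.k}\bgamma_{.k})=\bgamma_{.k}^\top\Lambda_{.k}\bgamma_{.k}=:B$. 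The cross-covariance $\Cov(\vec{\w}_k\bgamma_k,\vec{\w}_{.k}\bgamma_{.k})=\bgamma_k^\top\cdot 0\cdot\bgamma_{.k}$ vanishes. Because $\epsilon$ is independent of $\vec{\x}$, we get $\Var(y)=A+B+\sigma^2$ and $h_\beta^2=(A+B)/(A+B+\sigma^2)$.

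Next, compute the two partial FVEs from \eqref{eq:hk_def} and \eqref{eq:hdotk_def}. First, $h_k^2 = A/(A+B+\sigma^2)$. Second, $y_{.k}=\vec{\w}_{.k}\bgamma_{.k}+\epsilon$, so $\Var(y_{.k})=B+\sigma^2$, again by independence of $\epsilon$, and therefore $h_{.k}^2 = B/(B+\sigma^2)$. Finally, $1-h_k^2=(B+\sigma^2)/(A+B+\sigma^2)$, so
\[
h_k^2+(1-h_k^2)h_{.k}^2=\frac{A}{A+B+\sigma^2}+\frac{B+\sigma^2}{A+B+\sigma^2}\cdot\frac{B}{B+\sigma^2}=\frac{A+B}{A+B+\sigma^2}=h_\beta^2 .
\]
This is \eqref{eq:pc_fve_identity}.

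The only points needing care are bookkeeping rather than genuine obstacles:
\begin{enumerate*}[label=(\roman*)]
\item the variances must be read as conditional on the fixed $\bbeta$, consistent with \eqref{eq:h2beta};
\item the vanishing cross term relies on the exact orthogonality $V_k^\top\Sigma V_{.k}=0$, i.e.\ on \eqref{eq:block-independent};
\item we need $B+\sigma^2>0$, which holds whenever $\sigma^2>0$, so that $h_{.k}^2$ is well defined.
\end{enumerate*}
The sample centering of $\vec{\x}_i$ does not affect these population variances.
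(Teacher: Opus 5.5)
Your proof is correct and follows the same route as the paper's. Both arguments use $V_k^\top\Sigma V_{.k}=0$ and the independence of $\epsilon$ from $\vec{\x}$ to obtain the additive decompositions $\Var(y)=A+B+\sigma^2$ and $\Var(y_{.k})=B+\sigma^2$, and then conclude by algebra. The only difference is cosmetic: you write $h_\beta^2$, $h_k^2$ and $h_{.k}^2$ in closed form in terms of $A$, $B$ and $\sigma^2$, whereas the paper manipulates the variance ratios directly (for example $\Var(y_{.k})/\Var(y)=1-h_k^2$); your explicit form has the small advantage of exposing the well-definedness condition $B+\sigma^2>0$.
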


\subsection{FVE estimation}
\label{sec:FVE_estimation_procesure}

The proposed FVE decomposition provides a principled way to address strong correlation in high-dimensional settings. By projecting and residualizing both the design matrix and the response onto the orthogonal complement 
of the leading eigenvectors, the decomposition separates the total FVE into two complementary terms, see Proposition~\eqref{prop:pc_fve_decomp} and Equation~\eqref{eq:pc_fve_identity}. The first term corresponds to a low-dimensional domain, while the second term accounts for the residual variance in high-dimensional domain. 

In Section~\ref{sec:predictor_decomposition} and Section~\ref{sec:FVE_decomposition_theorem}, the matrix of leading eigenvectors $V_k$ comes from the population covariance matrix $\Sigma$. In practice, the population eigenvectors are unknown and the eigenvectors $\widehat V_k$ of the study-sample covariance matrix are a natural substitute. However, when the same dataset $\boldsymbol{X}$ is used both to estimate the eigenvectors and to construct the residualized predictors
\[
\boldsymbol{X}_{.k} = \boldsymbol{X}(I - \widehat V_k \widehat V_k^\top),
\]
the procedure suffers from the double-dipping problem \citep{kriegeskorte_circular_2009, ball_double_2020}. The estimated eigenvectors 
$\widehat V_k$ absorb sample-specific noise, so the projection 
$I - \widehat V_k \widehat V_k^\top$ removes not only true population structure 
but also idiosyncratic noise directions, artificially weakening the residual 
correlation structure and biasing the high-dimensional portion of the FVE estimate, as we will show in the simulations in Section~\ref{sec:simulaiton} below.

To avoid this dependence, we use surrogate eigenvectors $\widehat{V}_{\mathrm{sur},k}$ estimated from an independent reference dataset whose covariance structure closely matches that of the study sample. The purpose of the surrogate eigenvectors is to break the feedback loop created by double dipping: conditional on the reference dataset, the eigenvectors are fixed before the FVE estimator is applied to the study sample. Similar reference-panel strategies are common in genetics; for example, ancestry methods estimate a PCA reference space from genotyped reference individuals and then project study samples into that space \citep{schizophrenia_working_group_of_the_psychiatric_genomics_consortium_ld_2015, taliun_laser_2017}. We compare population, sample, and surrogate eigenvectors via simulations in Section~\ref{sec:simulaiton}. For the analyses presented subsequently, we assume that the surrogate dataset is of comparable size to the primary dataset. Limited surrogate sample size may introduce noise, and its impact is examined in Appendix~\ref{surrogate_sample_datasplitting}. 

Using surrogate eigenvectors, the predictor residualization according to Proposition~\ref{prop:pop-AS} is defined as: 
\begin{equation}
\label{eq:right_equation}
    \boldsymbol{X}_{.k} = \boldsymbol{X}(I - \widehat{V}_{\mathrm{sur},k}\widehat{V}_{\mathrm{sur},k}^\top),
\qquad
W_k = \boldsymbol{X}\widehat{V}_{\mathrm{sur},k}.
\end{equation}
Residualizing the response vector must be performed in sample space, which leads to the following linear residuals:
\begin{equation}
        \boldsymbol{y}_{.k}  = \boldsymbol{y} - W_k(W_k^\top W_k)^{-1} W_k^\top \boldsymbol{y}.
        \label{left_equation_linear} 
\end{equation} 
When $k$ is small relative to the sample size $n$, $(W_k^\top W_k)^{-1}$ can be computed stably. For large $k$, however, the matrix $W_k^\top W_k$ can become ill-conditioned, and inverting it may amplify finite-sample estimation noise. We therefore also consider the diagonal approximation:\begin{equation}
     \boldsymbol{y}^{(\mathrm{diag})}_{.k}
     = \boldsymbol{y} - W_k(D_{k}^{2})^{-1}W_k^\top \boldsymbol{y},
     \label{left_equation_diagonal}
\end{equation}
where $D_{k}$ contains the first $k$ singular values of $\boldsymbol{X}$.  

We propose the following procedure for estimating the decomposed FVE:
\begin{enumerate}
    \item Estimate $\hat K$ according to Section~\ref{sec:edge_algorithm}. 
    \item Choose $k \ge \hat K$ and obtain the corresponding surrogate eigenvectors $\widehat{V}_{\mathrm{sur},k}$.

    \item Form the matrix $W_k = \boldsymbol{X}\widehat{V}_{\mathrm{sur},k}$ and residualize the predictors using projection \eqref{eq:right_equation}.

    \item Residualize the response using the linear projection \eqref{left_equation_linear} or linear-diagonal projection \eqref{left_equation_diagonal}.

    \item Estimate the low-dimensional FVE $h^2_{k}$~\eqref{eq:hk_def} from $(\boldsymbol{y},W_k)$ with Ezekiel adjusted $R^2$ \eqref{eq_adjr}.
    \item Estimate the high-dimensional FVE $h^2_{.k}$~\eqref{eq:hdotk_def} from $(\boldsymbol{y}_{.k},\boldsymbol{X}_{.k})$ using GWASH or LMM-REML.

    \item Combine the two components as a plug-in estimator from Equation~\eqref{eq:pc_fve_identity} in  Proposition~\ref{prop:pc_fve_decomp}:
    \[
        \hat{h}^2_{\beta}
        = \hat{h}^2_{k} + (1-\hat{h}^2_k)\, \hat{h}^2_{.k}.
    \]
\end{enumerate}

The variance of the estimators 
$\hat{h}^2_{k}$ and $\hat{h}^2_{.k}$ can be estimated through \eqref{adj_var} and \eqref{gwash_var} for GWASH or GCTA software for LMM-REML, respectively. However, the $\hat{h}^2_{k}$ and $\hat{h}^2_{.k}$ are not independent as they are computed from the same study outcome $y$. For simplicity, we propose an estimator of variance where we ignore the covariance:
\begin{equation}
\label{eq:var_formula_FVE_decom}
    \widehat{\Var}(\hat{h}^2_{\beta})
    = (1-\hat{h}^2_{.k})^2 \widehat{\Var}(\hat{h}^2_k)
    + (1-\hat{h}^2_k)^2 \widehat{\Var}(\hat{h}^2_{.k}).
\end{equation} The detailed derivation is presented in Appendix~\ref{appendix:delta_var_decomp}. We show via simulations that this estimator works well, details are presented in Appendix~\ref{appendix:SE_comparison}.
\section{Simulations}
\label{sec:simulaiton}
The first simulation provides an empirical verification of Proposition~\ref{prop:pop-AS} by examining whether the second spectral moments of the covariance matrix in Equation~\eqref{eq:weak_cor} converge to the theoretical limits before and after removing the leading PCs. Proposition~\ref{prop:pop-AS} shows that, once at least the top $k \ge K$ strong components are removed, the residual covariance matrix should satisfy the weak-correlation condition and converge to their theoretical limits as $m$ and $n$ increase. The second set of simulations compares our proposed FVE decomposition method, where low-dimensional FVE is estimated using adjusted $R$ squared and high-dimensional FVE using GWASH and LMM-REML, with directly estimating FVE using GWASH and LMM-REML. 

\subsection{Spectral moments}
\label{sec:spectral_moment_simulation}
We evaluate the convergence of the second spectral moment under scenarios where zero or $K$ PCs are removed, where the residual covariance matrices are computed using population, sample, or surrogate eigenvectors under one- and two-block exchangeable correlation structure. As shown in Section~\ref{section_example_derivation}, removing the leading population PC from the one-block exchangeable correlation structure leaves a residual covariance structure that satisfies the weak-correlation condition, similarly in the two-block exchangeable correlation structure with $K = 2$. 

In this simulation, we generate $n$ rows of $\boldsymbol{X}$ of size $m$ i.i.d. from the multivariate normal distribution with mean $\vec{ \boldsymbol{0}}$ and covariance matrix $\Sigma$. We consider two covariance structures:
\begin{itemize}
        \item Covariance matrix $\Sigma$ is of one-block exchangeable correlation structure with parameter $\rho$ following Equation~\eqref{eqone-exchangable}.
        \item Covariance matrix $\Sigma$ is of two-block exchangeable correlation structure where $\rho_B= 0$ following Equation \eqref{eqtwo-exchangable}.
\end{itemize}
The number of observations $n$ increases from $500$ to $2000$ while the ratio between $m/n$ stays as $2$. For each configuration, we conduct 1000 simulation replicates and compare the estimated second spectral moments in Equation~\eqref{eq:weak_cor} with the asymptotic spectral moments calculated from Equation~\eqref{eq:one_block_nopcremove_strong} and Equation~\eqref{eq:two_block_nopcremove_strong} when no PC is removed, and with 0 when adequate amount of $K$ PCs are removed.

When no PC is removed from $\boldsymbol{X}$, we compare the performance of the \(\mathrm{WD}_0\) with sample correlation matrix plug-in estimator~\eqref{eq:weak_cor} and bias-corrected estimator based on \eqref{eq:mu2_hat}. In the case where $K$ PCs are removed from $\boldsymbol{X}$, we compare the sample spectral moment estimators under different choices of eigenvectors: population eigenvectors, sample eigenvectors and surrogate eigenvectors.

\begin{figure}
    \centering
\includegraphics[width=\linewidth]{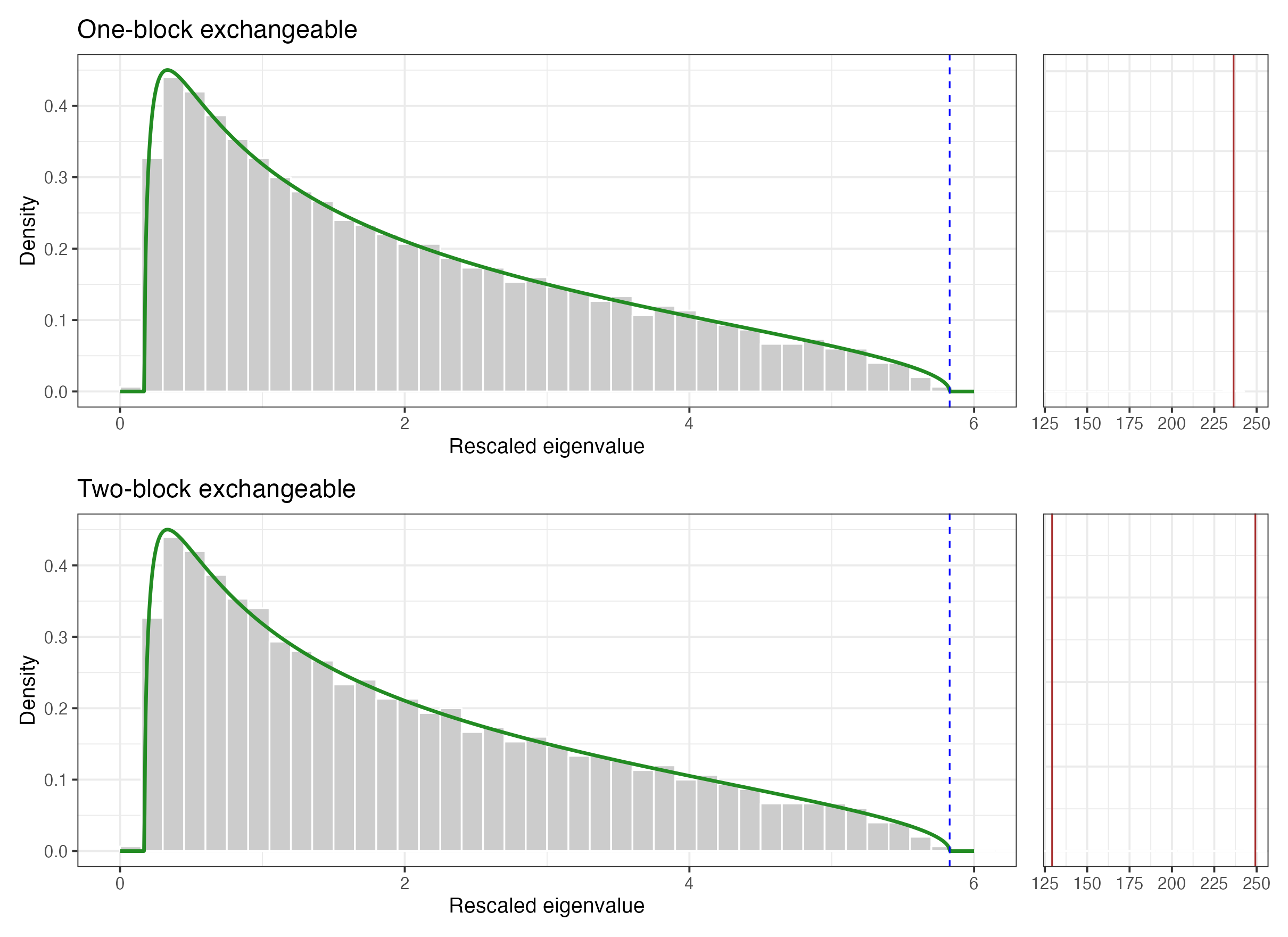}
    \caption{
   The gray histograms show the eigenvalues (brown solid lines) after rescaling by the right edge estimated from Spectrode and the solid green curves show the spectral distribution based on the unit MP law. The first row corresponds to the one-block exchangeable correlation structure, and the second row corresponds to the two-block exchangeable correlation structure. The dotted red lines in the separate right panels indicate the strong eigenvalues for the one-block and two-block exchangeable correlation structures, respectively.
    }
    \label{fig:sim_twoblock_result}
\end{figure}

\begin{figure} 
\includegraphics[width= \linewidth]{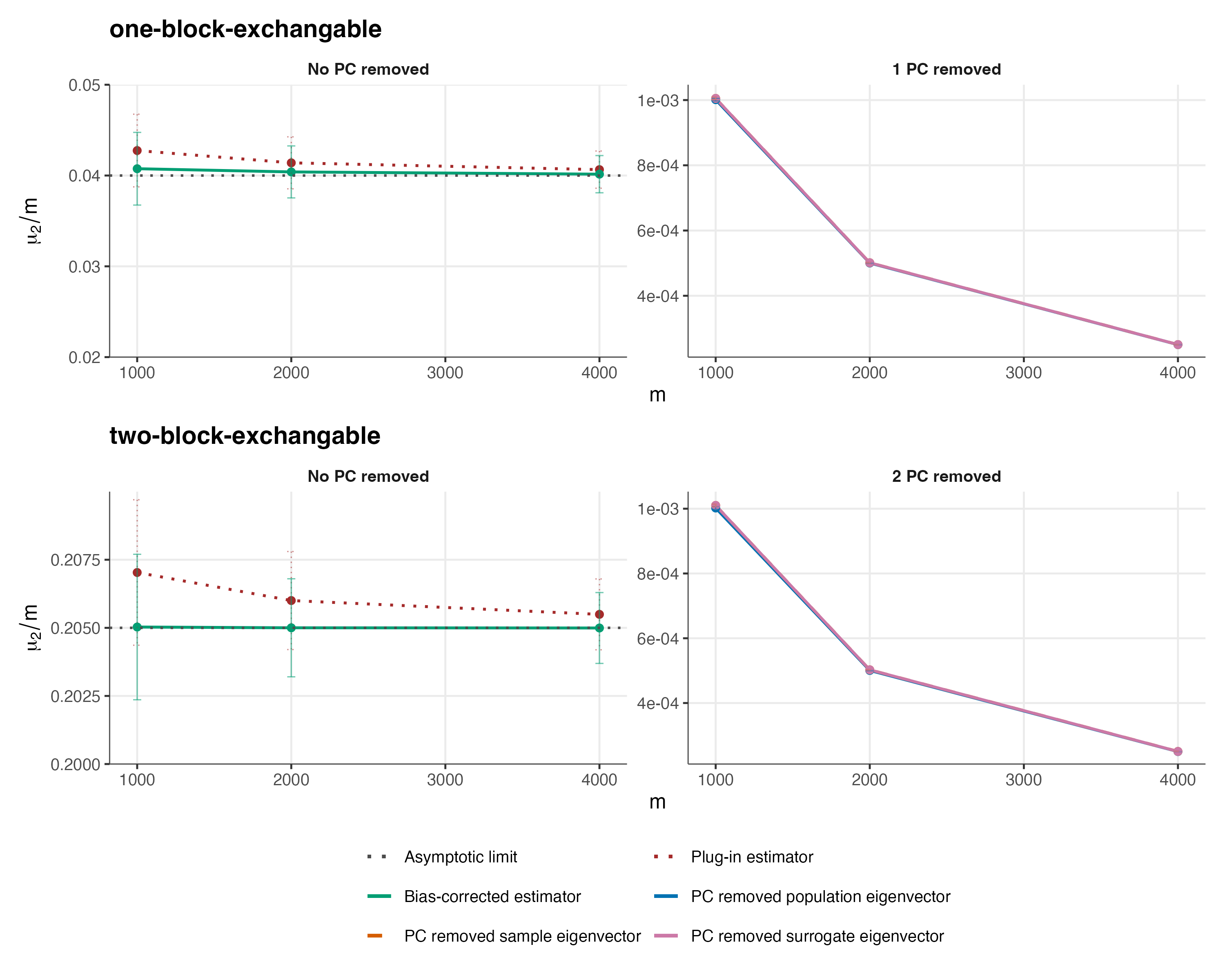} 
\caption{Monte Carlo simulation results of estimators for the second spectral moments as a function of $m$. The first row shows results in a one-block exchangeable correlation structure and the second row shows results under a two-block exchangeable correlation structure. The left column reports the spectral-moments estimates with no PC removed and the right column reports results after removing the leading $K$ PC using population (blue solid line), sample (orange dashed line), or surrogate (pink solid line) eigenvectors. The grey dotted line in the left column is the asymptotic line following Equation~\eqref{eq:one_block_nopcremove_strong} and Equation~\eqref{eq:two_block_nopcremove_strong} for one-block and two-block, respectively. The dotted brown line denotes the plug-in estimator and the solid green line shows the bias-corrected estimator; points are simulation averages with error bars indicating Monte Carlo variability.}
\label{sim_1_result}
\end{figure} 

The simulation results for the spectral moment are presented in Figure~\ref{sim_1_result}. The left column demonstrates that without removing any PC, the correlation structure of the matrix $\boldsymbol{X}$ does not exhibit weak dependence, as the spectral moments converge to the asymptotic line following Equation~\eqref{eq:one_block_nopcremove_strong} and Equation~\eqref{eq:two_block_nopcremove_strong} for one-block and two-block structures, respectively. In contrast, after removing $K$ PCs, as shown in the right column, the spectral moment converges to 0 as the sample size increases. This convergence is observed regardless of whether population, sample, or surrogate eigenvectors are used, indicating that the removal of the leading component effectively eliminates the strong correlations in the covariance structure of $\boldsymbol{X}$ as shown in Proposition~\ref{prop:pop-AS}.

\subsection{Simulation setting of FVE decomposition}
\label{sec_FVE_decomposition}
In this simulation study, we compare several scenarios for estimating the FVE with a decomposition-based approach as in Section \ref{sec:FVE_estimation_procesure}. We first evaluate the impact of double dipping from using sample eigenvectors under one-block exchangeable correlation structure. Subsequently, we assess the decomposition method with surrogate eigenvectors in conjunction with GWASH and LMM-REML, comparing their performance against directly applying GWASH or LMM-REML without decomposition. This allows us to examine whether the decomposition step mitigates the FVE estimation bias and to evaluate the stability of each method when more PCs are removed than necessary. 

The simulation setting is as follows:


\begin{enumerate}

    \item Generate $n$ i.i.d. rows of $\boldsymbol{X}$ from a multivariate normal distribution with mean $\vec{\boldsymbol{0}}$ and covariance matrix $\Sigma$ as in Section \ref{sec:spectral_moment_simulation}.
    \item Generate $\bbeta$ once for each simulation scenario from $N(0,\frac{h^2}{m})$, then hold it fixed across replicates. For each replicate, generate $n$ i.i.d. errors $\epsilon$ from $N(0,1-h^2)$.
    \item Generate $ y_i = \boldsymbol{\vec{x}_i} \boldsymbol{\beta} + \epsilon_i$, $i = 1,\ldots,n$ as in model~\eqref{eq:1} and calculate $h^2_{\beta}$ according to Equation~\eqref{eq:h2beta}.
    \item Conduct the FVE estimation procedure as described in Section \ref{sec:FVE_estimation_procesure} to obtain $\hat{h}^2_{\beta}$.
    \item After 1000 simulations, calculate the mean of $\hat{h}^2_{\beta}$ and empirical standard error to compare with $h^2_{\beta}$. 
\end{enumerate}

\subsection{Simulation results of FVE decomposition}   
\label{section_simulation_result}

\begin{figure} 
\includegraphics[width= \linewidth]{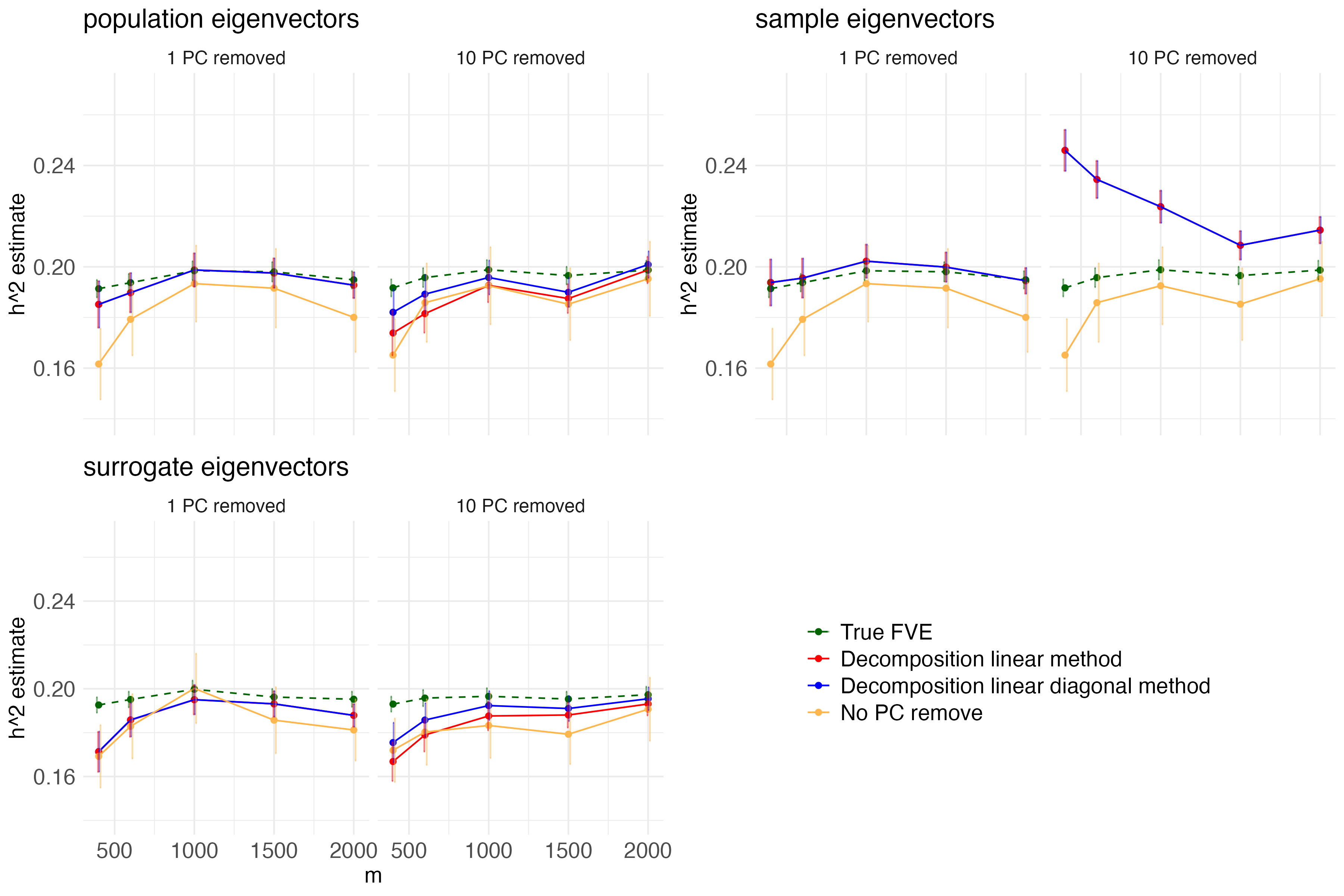} 
\caption{Monte Carlo simulation results for FVE estimation under a one-block exchangeable correlation structure with Adjusted R squared for low-dimensional FVE and GWASH for high-dimensional FVE, comparing population, sample, and surrogate eigenvectors. Within each eigenvector panel, the left and right subplots correspond to removing 1 and 10 PCs, respectively. Colored curves denote the decomposition-based estimators: red solid line for the linear method in Equation~\eqref{left_equation_linear} and blue solid line for the linear diagonal method in Equation~\eqref{left_equation_diagonal}.  FVE estimation with no PC removed is presented in orange solid line and the dashed green line indicates the true FVE. Points are simulation averages with error bars indicating Monte Carlo variability.}
\label{sim_eigenvec_result}
\end{figure}

As described in Section \ref{sec_FVE_decomposition} above, we perform two sets of FVE simulations. The first set of FVE simulations examines which eigenvectors used for decomposition affect the total-FVE estimate under the one-block exchangeable structure (Figure~\ref{sim_eigenvec_result}). The three eigenvector panels compare population, sample, and surrogate eigenvectors, and within each panel the left and right subplots correspond to removing one and ten PCs, respectively. The no-PC-removal estimator (orange line) is consistently below the true FVE, showing the downward bias caused by the strong exchangeable correlation. Removing the leading PC reduces this bias and brings the decomposed FVE estimates close to the true FVE. When population or surrogate eigenvectors are used, the ten-PC panels show that the linear-diagonal residualization is less sensitive to over-removing PCs than the full linear residualization. When sample eigenvectors are used, the red and blue curves overlap because the linear and linear-diagonal residualizations coincide algebraically for sample singular vectors; however, the ten-PC sample-eigenvector estimates are biased upward, consistent with double dipping from estimating the PCs and FVE on the same data. Thus, the surrogate-eigenvector results provide the most relevant practical setting: they avoid double dipping and remain close to the true FVE after the leading strong component is removed.

In the second set of FVE simulations, we fix the use of surrogate eigenvectors and evaluate the full FVE decomposition estimator as the number of removed PCs \(k\) varies under the one-block exchangeable model (Figure~\ref{sim_oneblock_result} and Table~\ref{sim_oneblock_table}). The procedure in Section~\ref{sec:edge_algorithm} selects \(\hat{K}=1\) with probability $0.958$, matching the single strong component in this model. Across the four one-block settings, when choosing \(k=0\), the FVE estimates are biased because the leading exchangeable component violates the weak-correlation conditions required by the high-dimensional estimators. After removing one PC, the total-FVE curves move close to the green true-FVE line for both GWASH (blue line) and LMM-REML (red line).

Table~\ref{sim_oneblock_table} shows the same pattern quantitatively: compared with \(k=0\), using \(k=1\) reduces bias in all settings, and increasing \(k\) from 1 to 10, 50, or 100 changes the bias only slightly. The standard error decreases slightly in LMM-REML but increases for GWASH. This behavior may be explained by the GWASH variance expression in \eqref{gwash_var}: the leading contribution to the variance is the term $\frac{m}{n\mu_2}$, whose denominator becomes small when the decomposition enters the weak-correlation regime. After decomposition, the empirical standard errors are also close to the formula-based standard errors reported in Appendix~\ref{appendix:SE_comparison}. Finally, the adjusted-\(R^2\) component tracks the true low-dimensional FVE well: the first PC captures most of the shared correlation in the one-block model, and additional PCs add only small increments to the low-dimensional component.

\begin{figure}
    \centering
    \includegraphics[width=\linewidth]{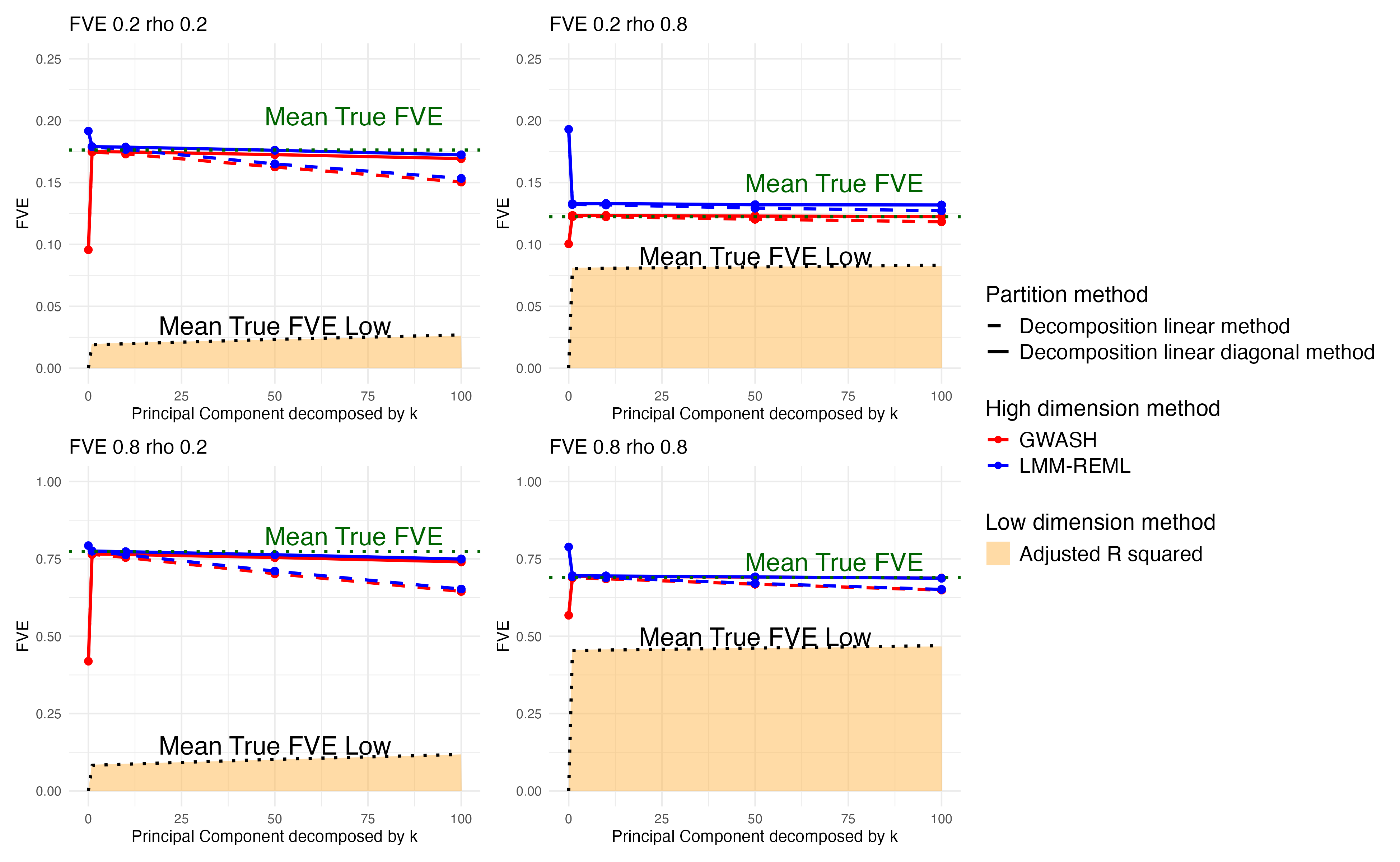}
    \caption{
    Simulation results under a one-block exchangeable correlation structure. The first and second rows correspond to simulations with nominal FVE values of $0.2$ and $0.8$, respectively. The first and second columns represent correlation levels $\rho = 0.2$ and $\rho = 0.8$. The dotted black line indicates the mean of the true FVE for the low-dimensional component, estimated using adjusted $R^2$. The tan shaded curve shows the adjusted-\(R^2\) estimate of the low-dimensional component \(\hat{h}^2_k\). The dashed green line indicates the mean of the total true FVE (low- plus high-dimensional components), where the high-dimensional component is estimated using two different methods: LMM-REML (blue) and GWASH (red).
    }
    \label{sim_oneblock_result}
\end{figure}

\begin{table}
\centering
\caption{Bias and empirical standard error of GWASH and LMM-REML estimators in Monte Carlo simulation under one-block exchangeable correlation structures for varying FVE ($0.2$, $0.8$) and correlation levels ($\rho = 0.2$, $0.8$).}
\label{sim_oneblock_table}
\begin{tabular}{rcccc|cccc}
\toprule
\multirow{3}{*}{$k$}
& \multicolumn{4}{c|}{FVE = 0.2}
& \multicolumn{4}{c}{FVE = 0.8} \\
\cmidrule(lr){2-5} \cmidrule(lr){6-9}
& \multicolumn{2}{c}{GWASH}
& \multicolumn{2}{c|}{LMM-REML}
& \multicolumn{2}{c}{GWASH}
& \multicolumn{2}{c}{LMM-REML} \\
\cmidrule(lr){2-3} \cmidrule(lr){4-5}
\cmidrule(lr){6-7} \cmidrule(lr){8-9}
& Bias & SE
& Bias & SE 
& Bias & SE 
& Bias & SE  \\
\midrule
\multicolumn{9}{l}{Correlation $\boldsymbol{\rho = 0.2}$} \\
0   & -0.081 & 0.043 &  0.014 & 0.072 & -0.355 & 0.084 &  0.019 & 0.039 \\
1   &  0.000 & 0.065 &  0.003 & 0.065 & -0.008 & 0.065 &  0.002 & 0.040 \\
10  &  0.000 & 0.064 &  0.002 & 0.064 & -0.010 & 0.065 & -0.001 & 0.041 \\
50  & -0.002 & 0.064 &  0.000 & 0.064 & -0.019 & 0.064 & -0.010 & 0.041 \\
100 & -0.005 & 0.062 & -0.004 & 0.062 & -0.033 & 0.062 & -0.024 & 0.039 \\
\midrule
\multicolumn{9}{l}{Correlation $\boldsymbol{\rho = 0.8}$} \\
0   & -0.022 & 0.021 &  0.070 & 0.130 & -0.123 & 0.029 &  0.098 & 0.045 \\
1   &  0.002 & 0.060 &  0.010 & 0.050 &  0.003 & 0.041 &  0.005 & 0.037 \\
10  &  0.002 & 0.060 &  0.010 & 0.050 &  0.003 & 0.041 &  0.004 & 0.037 \\
50  &  0.002 & 0.060 &  0.009 & 0.050 &  0.001 & 0.042 &  0.001 & 0.037 \\
100 &  0.001 & 0.058 &  0.009 & 0.049 & -0.002 & 0.041 & -0.003 & 0.036 \\
\bottomrule
\end{tabular}
\end{table}

\begin{table}
\centering
\caption{Bias and empirical standard error (SE) of GWASH and LMM-REML estimators in Monte Carlo simulation under a two-block exchangeable correlation structure ($\rho_1 = 0.1$, $\rho_2 = 0.9$, $\rho_\beta = 0$) for varying FVE ($0.2$, $0.8$).}
\label{sim_twoblock_table}
\begin{tabular}{rcccc|cccc}
\toprule
\multirow{3}{*}{$k$}
& \multicolumn{4}{c|}{FVE = 0.2}
& \multicolumn{4}{c}{FVE = 0.8} \\
\cmidrule(lr){2-5} \cmidrule(lr){6-9}
& \multicolumn{2}{c}{GWASH}
& \multicolumn{2}{c|}{LMM-REML}
& \multicolumn{2}{c}{GWASH}
& \multicolumn{2}{c}{LMM-REML} \\
\cmidrule(lr){2-3} \cmidrule(lr){4-5}
\cmidrule(lr){6-7} \cmidrule(lr){8-9}
& Bias & SE
& Bias & SE
& Bias & SE
& Bias & SE \\
\midrule
\multicolumn{9}{l}{Two-block exchangeable: $\rho_1 = 0.1,\ \rho_2 = 0.9$} \\
0   & -0.067 & 0.025 &  0.017 & 0.084 &  0.301 & 0.043 & -0.047 & 0.028 \\
1   & -0.091 & 0.028 & -0.022 & 0.064 & -0.269 & 0.024 & -0.009 & 0.023 \\
2   & -0.009 & 0.063 & -0.007 & 0.060 &  0.000 & 0.035 & -0.006 & 0.022 \\
10  & -0.010 & 0.063 & -0.008 & 0.060 &  0.000 & 0.035 & -0.007 & 0.021 \\
50  & -0.015 & 0.062 & -0.012 & 0.060 & -0.002 & 0.034 & -0.010 & 0.022 \\
100 & -0.021 & 0.060 & -0.018 & 0.056 & -0.004 & 0.035 & -0.015 & 0.022 \\
\bottomrule
\end{tabular}
\end{table}

Similar trends are observed under the two-block exchangeable correlation structure. Results for this setting are shown in Figure~\ref{sim_twoblock_figure} and Table~\ref{sim_twoblock_table}. The procedure in Section~\ref{sec:edge_algorithm} selects \(\hat{K}=2\) with probability $0.954$. When no PCs are removed, both GWASH and LMM-REML can exhibit non-negligible bias. Removing only one PC does not fulfill the \(\mathrm{WD}_0\) and cannot eliminate the bias. However, once two or more PCs are removed, both methods produce approximately unbiased estimates, and the empirical standard errors remain stable as additional PCs are removed. These results further demonstrate the robustness of the decomposition-based correction across covariance structures beyond the one-block exchangeable setting.

\begin{figure}
\includegraphics[width=\linewidth]{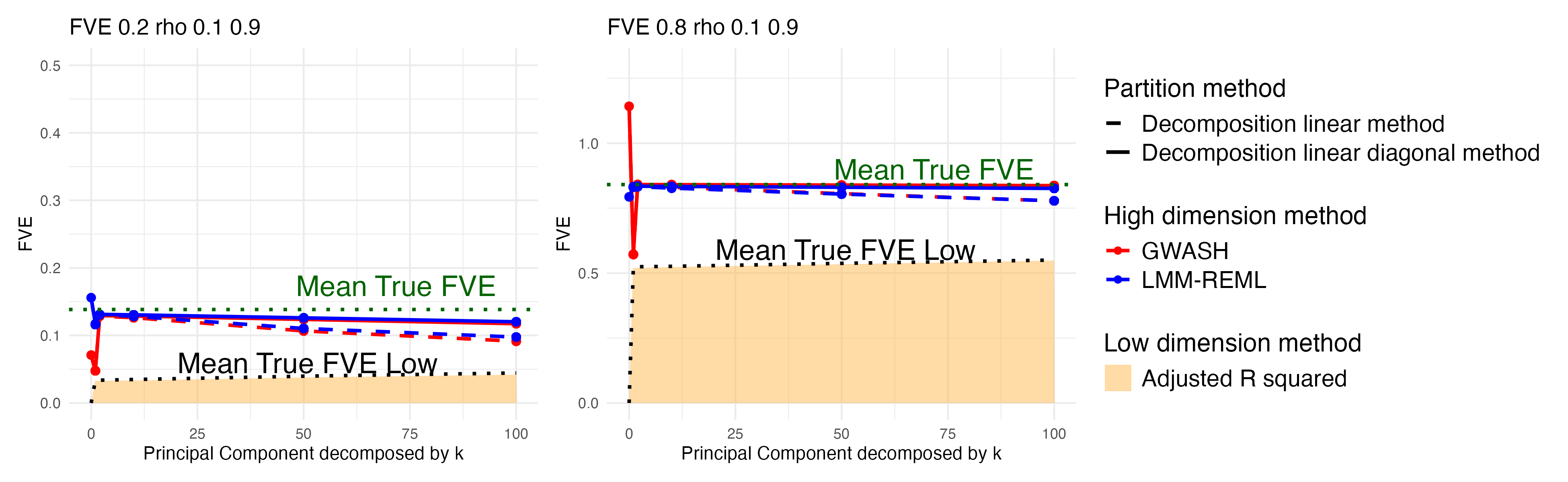}
    \caption{Simulation results under a two-block exchangeable correlation structure with $\rho_1 = 0.1$ and $\rho_2 = 0.9$ with $\rho_B = 0$. The first and second columns represent FVE $0.2$ and $0.8$. The dotted black line indicates the mean of the true FVE for the low-dimensional component, estimated using adjusted $R^2$. The dashed green line indicates the mean of the total true FVE (low- plus high-dimensional components), where the high-dimensional component is estimated using two different methods: LMM-REML (blue) and GWASH (red). The projection for the high dimension method is shown as a solid line for the linear diagonal method and as a dashed line for the linear model.}
\label{sim_twoblock_figure}
\end{figure}

\subsection{Additional simulations} 
\label{sec:addtional_simulation}
We conduct additional simulations investigating the splitting of the surrogate and study samples, as well as a separate simulation comparing empirical and estimated standard errors from Equation~\eqref{eq:var_formula_FVE_decom}. The results show that, as the surrogate sample size increases and becomes comparable to the study sample size, the bias decreases and the estimated FVE approaches the true FVE. The estimator of standard error in Equation~\eqref{eq:var_formula_FVE_decom} also agrees well with the empirical standard error in most settings. One exception occurs when the FVE is high and the correlation is low, in which case the GWASH standard error tends to be underestimated. Further details are provided in Appendix~\ref{surrogate_sample_datasplitting} and Appendix~\ref{appendix:SE_comparison}, respectively.

\section{Data application}
\subsection{ABCD data}

The Adolescent Brain Cognitive Development (ABCD) Study is a large-scale, ongoing 10-year longitudinal study enrolling 11,880 children starting at 9–11 years old and designed to investigate brain development and health from late childhood into early adulthood. The ABCD data repository grows and changes over time. The ABCD data used in this paper came from \href{http://dx.doi.org/10.15154/z563-zd24}{http://dx.doi.org/10.15154/z563-zd24}. In this study, we utilize data from the ABCD Study Release 3.0, specifically structural T1-weighted MRI data, to examine the FVE of cortical surface area in relation to the neuropsychiatric outcome of general cognitive ability, assessed using a ``gold standard'' measure of crystallized intelligence in children (\cite{akshoomoff_viii_2013}).

The MRI data were collected from 21 study sites using harmonized acquisition protocols across Siemens Prisma, GE 750, and Philips 3T scanners. To account for potential site and scanner-related variability, scanner ID was included in all analyses. A comprehensive description of the MRI acquisition protocols is provided in \cite{casey_adolescent_2018}. Our analysis focused on the baseline visit of the ABCD cohort and yielded a final sample of 6,240 subjects with surface area data and complete demographic information (age, sex, race/ethnicity) after quality control procedures.

\subsection{MRI preprocessing}
Preprocessing of the T1-weighted images was performed using a software package developed and maintained in-house at the Center for Multimodal Imaging and Genetics (CMIG) at the University of California, San Diego (UCSD) (\cite{hagler_image_2019}) to extract vertex-wise measures of cortical surface area. The preprocessing pipeline consisted of several key steps: First, images underwent distortion correction, followed by spatial alignment to a standardized atlas space. Automated segmentation was then applied to identify gray matter, white matter, and cerebrospinal fluid. The cortical surface was reconstructed by tessellating the gray/white matter boundary, generating a high-resolution 3D mesh representation of each participant’s cortical surface. To facilitate cross-subject comparisons, a standardized spherical registration approach was employed. Each participant’s cortical surface was non-rigidly aligned to preserve the relative topology of cortical folding patterns (sulci and gyri) across individuals while enabling direct vertex-wise comparisons. Cortical surface area was computed at each vertex in the participant’s native space by summing one-third of the area of all adjacent triangles. These vertex-wise surface area values were then interpolated onto a template, resulting in a dataset containing 20,484 total vertices (10,242 per hemisphere) per participant. After excluding the medial wall and vertices containing no information, each participant’s dataset contained 18,742 vertices.

\subsection{Data analysis result}
\begin{figure}[t!]
    \centering    \includegraphics[width=0.9\linewidth]{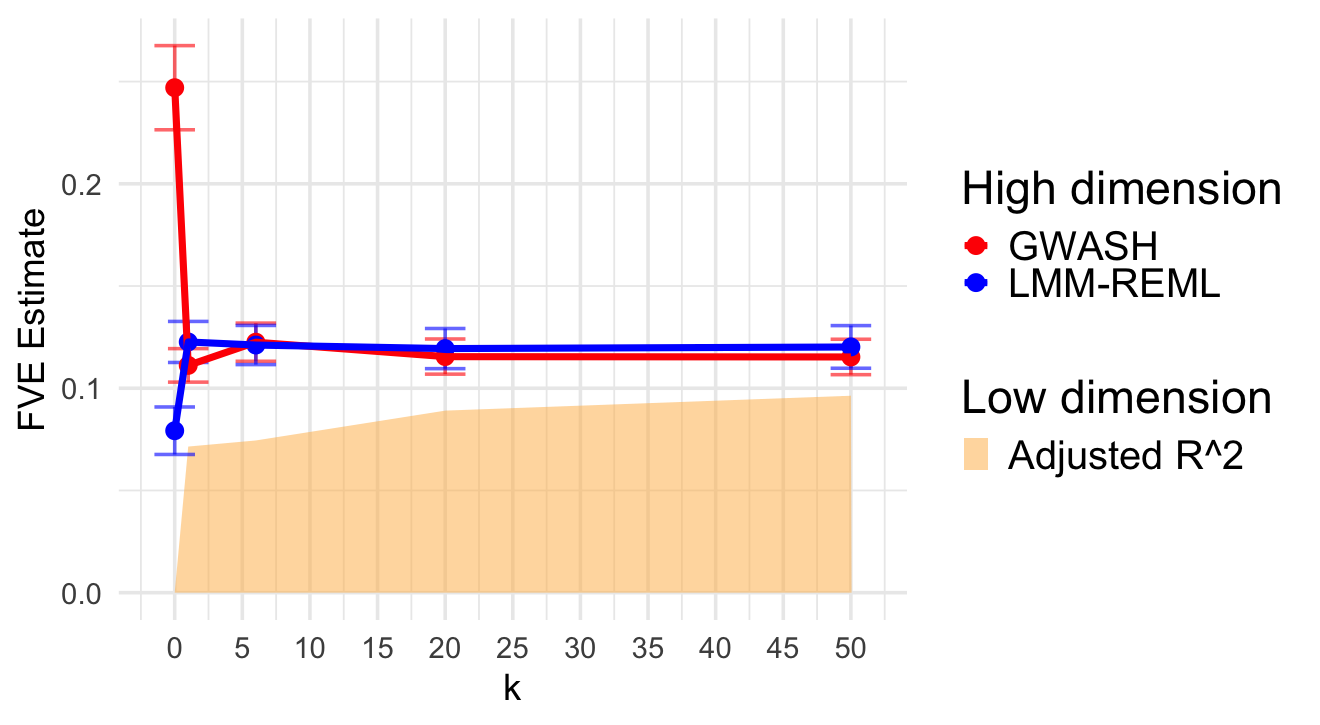}
    \caption{FVE estimation after decomposition for crystallized intelligence using cortical surface area data from the ABCD Study. The low-dimensional component is estimated by adjusted $R^2$, and the residual high-dimensional component is estimated by GWASH (red line) or LMM-REML (blue line) after conducting linear-diagonal residualization schemes in \eqref{left_equation_diagonal}. Error bars show the standard error across 100 permutations generated by randomly splitting the sample and surrogate data into equal-sized groups. }\label{fig:real_data_result}
\end{figure}

We applied the proposed FVE decomposition to cortical surface area data from the ABCD Study with crystallized intelligence as the outcome. Based on the simulation results in Section~\ref{sim_oneblock_result}, we used Ezekiel's adjusted $R^2$ for the low-dimensional domain and GWASH or LMM-REML for the residual high-dimensional component, under linear-diagonal residualization schemes in \eqref{left_equation_diagonal}. The study and surrogate samples were obtained by an equal split, as described in Appendix~\ref{surrogate_sample_datasplitting}. Using the procedure in Section~\ref{sec:edge_algorithm}, we estimate $\hat{K}=6$. We conduct a sensitivity analysis where 1000 permutations are generated by randomly splitting the sample and surrogate data into equal-sized groups. 

Figure~\ref{fig:real_data_result} presents final results on FVE decomposition and estimation using the ABCD data. Direct application of LMM-REML or GWASH without decomposition yields different results between the two methods. As more PCs are incorporated into the lower-dimensional representation, the adjusted $R^2$ values increase as expected. The LMM-REML (blue line) method demonstrates slightly greater stability than GWASH (red), likely due to its use of individual-level data; however, the overall differences between methods remain minimal. When only 1 PC is decomposed, there is still a small gap between these two high dimensional methods. Once we choose $k > \hat{K} = 6$, the two methods reach similar result and the result is relatively stable if the choice of $k$ is not too high. When $k = 6$, the estimated total FVE was approximately $12.1\%$ (s.e. 0.0093 for GWASH and 0.0095 for LMM-REML) with low-dimensional component accounts for around $7.5\%$ and high-dimensional component accounts for $4.6\%$.

These findings highlight the practical importance of estimating FVE through decomposition in the setting where cortical surface area features exhibit strong correlation. The FVE estimate obtained from the proposed decomposition method is substantially larger than estimates suggested by existing analyses based on either marginal associations or non-decomposed high-dimensional approaches. For example, \cite{marek_reproducible_2022} reported the largest univariate FVE of only 2.56\% across all univariate brain-wide regressions in ABCD study, suggesting that univariate associations substantially underestimate the FVE. A related limitation may arise for high-dimensional method such as LMM-REML when they are applied without separating the dominant low-dimensional structure from the residual high-dimensional component. As shown in the UK Biobank analysis, \citet{couvyduchesne_unified_2020} estimated that BLUP-based grey-matter scores derived from surface area explained only 0.59\% of the variation in fluid intelligence under an LMM-REML-based approach. By contrast, our proposed FVE decomposition yields a stable estimate of approximately 12.1\% in the ABCD data, indicating that a non-negligible proportion of variation in crystallized intelligence can be attributed to cortical surface area when both the leading low-dimensional structure and the residual high-dimensional component are properly accounted for.

\section{Discussion}

In this paper, we developed a principal-components decomposition framework for estimating the FVE in high-dimensional linear models with strongly correlated predictors. Existing high-dimensional FVE estimators, including GWASH and LMM-REML, are useful when the predictor covariance structure is sufficiently weakly dependent, but their performance can deteriorate when this condition is violated. Our method directly addresses this issue by separating the predictor space into a low-rank component spanned by the leading PCs and a residual high-dimensional component whose covariance structure is weaker after the dominant directions are removed. The total FVE can then be estimated by combining the low-dimensional FVE, estimated using adjusted $R^2$, with the residual high-dimensional FVE, estimated using GWASH or LMM-REML.

A central message of this work is that strong correlation among predictors is not merely a nuisance feature of neuroimaging data, but a structural feature that can substantially affect FVE estimation. This problem is distinct from ordinary covariate adjustment or demeaning. Regressing out demographic covariates does not necessarily eliminate strong dependence among the imaging features themselves. Removing a global mean (\cite{kriegeskorte_representational_2008, azriel_estimation_2020,  revsine_unifying_2024}) is often used to eliminate this strong dependence. However, the global mean fully captures the strong dependence only in the case of a one-block exchangeable correlation (\cite{azriel_empirical_2015}). More complex covariance structures, such as a two-block correlation structure or correlation structures observed in real data, are represented by a higher number of leading eigenvectors with more specific entries.

Moreover, the leading PCs may carry real signal for the phenotype and should not simply be discarded. The proposed decomposition framework is designed to preserve this low-rank contribution while still allowing standard high-dimensional estimators to be applied to the residual feature space where their assumptions are more appropriate. For example, as seen in our data analysis results, 62\% of the explained variation is attributable to broad low-rank anatomical variation, while the remaining 38\% is attributable to residual high-dimensional spatial patterns.

The proposed method is supported by both theory and simulations that show that, under a one-block or two-block exchangeable correlation structure, removing the appropriate number of dominant PCs produces residuals for which the desired weak dependence assumption holds. The result is stable if more than the appropriate number of dominant PCs is removed, so that exact estimation of this number is not necessary.

The FVE estimation simulations further demonstrate the practical impact of this correction. When no PCs are removed, both GWASH and LMM-REML can exhibit non-negligible bias. Removing too few PCs may leave part of the strong covariance structure in the residual design, so the resulting estimator can remain biased. Once an adequate number of PCs is removed, however, both GWASH and LMM-REML produce approximately unbiased estimates. Following the procedure in Section~\ref{sec:edge_algorithm} we can estimate the $\hat K$, but our simulation shows that the estimates remain stable when more PCs are removed than strictly necessary, indicating that mild over-decomposition is less harmful than under-decomposition in the settings considered here. This robustness is especially useful in practice, where the true number of strong components is unknown and must be estimated from data. In this work we propose Spectrode \citep{dobriban_efficient_2015} as a method for estimating this number, however the robustness makes it so that the estimate does not need to be precise and can be treated as a lower bound.

Our work also highlights the importance of using surrogate eigenvectors. If the same study sample is used both to estimate the PCs and to estimate FVE after residualization, the procedure can suffer from double dipping \citep{kriegeskorte_circular_2009, ball_double_2020}. Sample eigenvectors may absorb sample-specific noise, and residualizing with respect to these directions can distort the remaining covariance structure. In contrast, surrogate eigenvectors estimated from an independent but representative reference sample break this feedback loop. Conditional on the surrogate sample, the eigenvectors are fixed before FVE estimation is performed in the study sample. Our simulations show that surrogate eigenvectors lead to more stable and approximately unbiased FVE estimates. This feature makes the method conceptually similar to reference-panel approaches used in genetic analyses \citep{schizophrenia_working_group_of_the_psychiatric_genomics_consortium_ld_2015, taliun_laser_2017}, while adapting the idea to neuroimaging covariance decomposition.

The real-data application to the ABCD cortical surface area data illustrates how the method can be used in a large-scale neuroimaging study. In the analysis of crystallized intelligence, direct application of GWASH and LMM-REML without decomposition produced very different estimates for the two methods. After decomposing the FVE with at least $\hat{k} = 6$ leading PCs, the two high-dimensional methods gave comparable and more stable results with respect to $k$. The estimated total FVE was approximately $12.1\%$, with the low-dimensional component accounting for $7.5\%$ and the high-dimensional component accounting for $4.6\%$, suggesting that the association of cortical surface area with crystallized intelligence is explained by both global and local components. Our total FVE estimate of $12.1\% \pm 0.1\%$ is substantially higher than previous estimates such as  univariate FVE of only 2.56\% in ABCD data \citep{marek_reproducible_2022} and 0.59\% in UK Biobank data \citep{couvyduchesne_unified_2020}. Our results indicate a stronger association than previously known.

Several limitations remain for this work. First, the use of surrogate eigenvectors introduces a tradeoff. Independent surrogate data help avoid double dipping, but splitting the available sample into study and surrogate sets reduces the effective sample size for FVE estimation. If the surrogate sample is too small, the estimated eigenvectors may be noisy; if the surrogate sample comes from a different population, scanner distribution, or preprocessing pipeline, covariance mismatch may introduce additional bias. Here we only investigate the sample-splitting strategies empirically.

Second, the eigenvector estimation of the covariance matrix is time-consuming when the dimension of the covariates is large. This problem is ameliorated by the fact that only a small number of leading eigenvectors need to be computed (in our case, only 6). Also, as commonly done in genetics \citep{chen_improved_2013, taliun_laser_2017}, a set of surrogate eigenvectors could be computed once in a surrogate dataset and stored for future use without additional computation. Third, our artificial simulations included one- and two-block exchangeable correlation structures, as well as realistic simulations. We expect other strong correlations structures to behave similarly. 

Future work may include theoretical investigation of optimal sample-splitting strategies and other strong correlations structures, as well as extensions to regression settings beyond linear such as logistic with binary outcomes.


\section{Code availability}

The implementation code used for this paper can be found at https://github.com/Aileen-Luo/FVE-decomposition-with-strong-correlation.

\section{Author contributions statement}

ML: Conceptualization, Methodology, Software, Formal Analysis, Writing - Review and Editing, Writing Original Draft
CC: Conceptualization, Methodology, Formal Analysis, Writing - Review and Editing, Supervision
DA: Conceptualization, Methodology, Formal Analysis, Writing - Review and Editing, Supervision
AS: Conceptualization, Methodology, Formal Analysis, Writing - Review and Editing, Project Administration, Funding Acquisition.

\section{Acknowledgment}

This work was partially supported by National Institutes of Health [R01MH128923].

\addcontentsline{toc}{chapter}{Bibliography}
\bibliography{Bibliography} 
\bibliographystyle{apa}

\appendix
\counterwithin{figure}{section}

\section{Appendix}

\subsection{Proof of proposition ~\ref{prop:pop-AS}}
\begin{proof}
Since $V_k$ collects the top $k$ orthonormal eigenvectors of $\Sigma$,
the projection $(I-V_kV_k^\top)$ removes the first $k$ eigendirections of
$\Sigma$. Hence the nonzero eigenvalues of $\Sigma_{\cdot k}$ are
$\lambda_{k+1},\lambda_{k+2},\ldots,\lambda_{m}.$
Because $k\ge K$, all eigenvalues remaining after projection are bulk
eigenvalues. By the bounded-bulk assumption \eqref{eq:bulk_bound}, all nonzero eigenvalues of $\Sigma_{\cdot k}$ are bounded
above by $C$.

By assumption \eqref{eq:residual_variances_bounded}, since $c_k$ is a lower bound on the diagonal elements of $D_{\cdot k}$ (which is a diagonal matrix),
\[
    D_{\cdot k}\ge c_kI_m \qquad \Rightarrow  \qquad
    D_{\cdot k}^{-1}
    \le
    \frac{1}{c_k}I_m.
\]
entry-wise. As a result, the re-standardized residual correlation matrix has bounded trace
\[
\tr\!\left(
    \widetilde{\Sigma}_{\cdot k}^{\,2}
    \right)
  =  \tr\!\left(
    D_{\cdot k}^{-1/2}
    \Sigma_{\cdot k}
    D_{\cdot k}^{-1} 
    \Sigma_{\cdot k}
    D_{\cdot k}^{-1/2}
    \right)=   \tr\!\left( \left[D_{\cdot k}^{-1}
    \Sigma_{\cdot k} \right]^2
    \right) \le \frac{1}{c_k^2} \tr\!\left(\Sigma_{\cdot k}^{\,2}
    \right).
\]
Since the nonzero eigenvalues of $\Sigma_{\cdot k}$ are bounded above by
$C$, 
\[
    \tr\!\left(
    \widetilde{\Sigma}_{\cdot k}^{\,2}
    \right)
    \le
    m
    \left(
    \frac{C}{c_k}
    \right)^2.
\]
Dividing by $m^2$ gives
\[
    \frac{1}{m^2}
    \tr\!\left(
    \widetilde{\Sigma}_{\cdot k}^{\,2}
    \right)
    \le
    \frac{1}{m}
    \left(
    \frac{C}{c_k}
    \right)^2
    \longrightarrow 0.
\]
\end{proof}

\subsection{Proof of Proposition ~\ref{prop:pc_fve_decomp}}

\begin{proof}

By the PC reparameterization in Equation~\eqref{eq:predictor_decomposition}, $\vec{\x}\bbeta=\vec{\w}_k\bgamma_k+\vec{\w}_{.k}\bgamma_{.k}$,
so
\[
h_\beta^2
=\frac{\Var(\vec{\x}\bbeta)}{\Var(y)}
=\frac{\Var(\vec{\w}_k\bgamma_k+\vec{\w}_{.k}\bgamma_{.k})}{\Var(y)}.
\]
Since $V$ is an orthonormal eigenbasis of $\Sigma$, we have
\[
\E(\vec{\w}_k^\top \vec{\w}_{.k})
=\E\!\bigl(V_k^\top \vec{\x}^\top \vec{\x} V_{.k}\bigr)
= V_k^\top \Sigma V_{.k}
= 0,
\]
which implies $\Cov(\vec{\w}_k\bgamma_k,\vec{\w}_{.k}\bgamma_{.k})=0$ and hence
\[
\Var(\vec{\x}\bbeta)=\Var(\vec{\w}_k\bgamma_k)+\Var(\vec{\w}_{.k}\bgamma_{.k}).
\]
Dividing by $\Var(y)$ yields
\[
h_\beta^2
=\frac{\Var(\vec{\w}_k\bgamma_k)}{\Var(y)}
+\frac{\Var(\vec{\w}_{.k}\bgamma_{.k})}{\Var(y)}
= h_k^2 + \frac{\Var(\vec{\w}_{.k}\bgamma_{.k})}{\Var(y)},
\]
where the first term is $h_k^2$ by \eqref{eq:hk_def}.

Now consider the residualized model \eqref{eqform2}: $y_{.k}=\vec{\w}_{.k}\bgamma_{.k}+\epsilon$.
By definition of $y_{.k}$, we also have $y=\vec{\w}_k\bgamma_k+y_{.k}$. Using again
$\Cov(\vec{\w}_k\bgamma_k, \vec{\w}_{.k}\bgamma_{.k})=0$ and $\Cov(\vec{\w}_k\bgamma_k,\epsilon)=0$,
it follows that $\Cov(\vec{\w}_k\bgamma_k,y_{.k})=0$ and therefore
\[
\Var(y)=\Var(\vec{\w}_k\bgamma_k)+\Var(y_{.k})
\quad\Rightarrow\quad
\frac{\Var(y_{.k})}{\Var(y)}=1-\frac{\Var(\vec{\w}_k\bgamma_k)}{\Var(y)}=1-h_k^2.
\]
Finally, by the definition \eqref{eq:hdotk_def},
\[
\frac{\Var(\vec{\w}_{.k}\bgamma_{.k})}{\Var(y)}
=
\frac{\Var(y_{.k})}{\Var(y)}\cdot
\frac{\Var(\vec{\w}_{.k}\gamma_{.k})}{\Var(y_{.k})}
=
(1-h_k^2)\,h_{.k}^2.
\]
Substituting this into the earlier expression for $h_\beta^2$ gives
$h_\beta^2 = h_k^2 + (1-h_k^2)h_{.k}^2$, proving \eqref{eq:pc_fve_identity}.
\end{proof}

\subsection{Derivations for the examples in Section~\ref{section_example_derivation}}
\label{appendix:derivation_for_example}
This appendix gives detailed derivation for one-block and two-block exchangeable correlation examples in Section~\ref{section_example_derivation}.
\paragraph{One-block exchangeable correlation}

First we show that this case violates the weak dependence condition. Before any PC is removed, according to Equation~\eqref{eq:weak_cor} we have
\[
\frac{1}{m^2}\tr(\widetilde{\Sigma}^2) =
    \frac{1}{m^2}\tr(\Sigma^2) =
    \frac{1}{m^2}
    \left\{
    \bigl[1+(m-1)\rho\bigr]^2
    +(m-1)(1-\rho)^2
    \right\}.
\]
Taking the limit as $m\to\infty$, we get
\[
    \frac{1}{m^2}\tr(\widetilde{\Sigma}^2)
    \longrightarrow
    \rho^2.
\]
Hence \(\mathrm{WD}_0\) fails if no  PC is removed when \(\rho>0\).

Next we show that after 1 PC is removed, the weak dependence condition is fulfilled. When removing 1 PC, the residual covariance is
\[
\begin{aligned}
    \Sigma_{\cdot 1}
    &=
    (I_m-V_1V_1^\top)\Sigma(I_m-V_1V_1^\top) \\
    &=
    \left(I_m-\frac{1}{m}\mathbf 1_m\mathbf 1_m^\top\right)
    \left\{(1-\rho)I_m+\rho\mathbf 1_m\mathbf 1_m^\top\right\}
    \left(I_m-\frac{1}{m}\mathbf 1_m\mathbf 1_m^\top\right) \\
&=
    (1-\rho)
    \left(I_m-\frac{1}{m}\mathbf 1_m\mathbf 1_m^\top\right).
\end{aligned}
\]
Thus
\[
    D_{\cdot 1}
    :=
    \diag(\Sigma_{\cdot 1})
    =
    (1-\rho)\left(1-\frac{1}{m}\right)I_m
\]
and
\[
\begin{aligned}
    \widetilde{\Sigma}_{\cdot 1}
    &:=
    D_{\cdot 1}^{-1/2}\Sigma_{\cdot 1}D_{\cdot 1}^{-1/2} \\
    &=
    \frac{1}{(1-\rho)(1-1/m)}
    (1-\rho)
    \left(I_m-\frac{1}{m}\mathbf 1_m\mathbf 1_m^\top\right) \\
    &=
    \frac{m}{m-1}
    \left(I_m-\frac{1}{m}\mathbf 1_m\mathbf 1_m^\top\right).
\end{aligned}
\]
The eigenvalues of \(\widetilde{\Sigma}_{\cdot 1}\) are \(0\)  and \(\frac{m}{m-1}\) with multiplicity \(m-1\).
Therefore
\[
\begin{aligned}
    \tr\!\left(\widetilde{\Sigma}_{\cdot 1}^{\,2}\right)
    =
    (m-1)\left(\frac{m}{m-1}\right)^2 
    =
    \frac{m^2}{m-1},
\end{aligned}
\]
so that
\[
\begin{aligned}
    \mu_2
    =
    \frac{1}{m}\tr\!\left(\widetilde{\Sigma}_{\cdot 1}^{\,2}\right) =
    \frac{1}{m}\cdot \frac{m^2}{m-1} =
    \frac{m}{m-1}
    \longrightarrow 1.
\end{aligned}
\]
Also according to Proposition~\ref{prop:pop-AS},
\[
\begin{aligned}
    \frac{1}{m^2}\tr\!\left(\widetilde{\Sigma}_{\cdot 1}^{\,2}\right)
    =
    \frac{1}{m^2}\cdot \frac{m^2}{m-1} =
    \frac{1}{m-1}
    \longrightarrow 0.
\end{aligned}
\]

\paragraph{Two-block exchangeable correlation}

First we show that the two-block exchangeable correlation case violates the weak dependence condition. 
Before any PC is removed, we have
\[
    \Sigma=
    \begin{bmatrix}
    (1-\rho_1)I_{m_1}+\rho_1\mathbf 1_{m_1}\mathbf 1_{m_1}^\top & \rho_B\mathbf 1_{m_1}\mathbf 1_{m_2}^\top\\
    (\rho_B\mathbf 1_{m_1}\mathbf 1_{m_2}^\top)^\top & (1-\rho_2)I_{m_2}+\rho_2\mathbf 1_{m_2}\mathbf 1_{m_2}^\top
    \end{bmatrix},
\]
and
\[
    \tr(\Sigma^2)
    =
    \tr(((1-\rho_1)I_{m_1}+\rho_1\mathbf 1_{m_1}\mathbf 1_{m_1}^\top)^2)+2\tr((\rho_B\mathbf 1_{m_1}\mathbf 1_{m_2}^\top)(\rho_B\mathbf 1_{m_1}\mathbf 1_{m_2}^\top)^\top)+\tr(((1-\rho_2)I_{m_2}+\rho_2\mathbf 1_{m_2}\mathbf 1_{m_2}^\top)^2).
\]
Since \((\mathbf 1_{m_1}\mathbf 1_{m_1}^\top)^2=m_1(\mathbf 1_{m_1}\mathbf 1_{m_1}^\top)\) and \(\tr(\mathbf 1_{m_1}\mathbf 1_{m_1}^\top)=m_1\),
\[
\begin{aligned}
    \tr(\widetilde{\Sigma}^2)
    &=
    \tr(\Sigma^2) =
    m
    +m_1(m_1-1)\rho_1^2
    +m_2(m_2-1)\rho_2^2
    +2m_1m_2\rho_B^2.
\end{aligned}
\]
If \(m_j/m\to\pi_j\) for \(j=1,2\), according to Equation\eqref{eq:weak_cor}
\[
\frac{1}{m^2}\tr(\widetilde{\Sigma}^2)
=
\frac{1}{m}
+
\frac{m_1(m_1-1)}{m^2}\rho_1^2
+
\frac{m_2(m_2-1)}{m^2}\rho_2^2
+2\frac{m_1m_2}{m^2}\rho_B^2 \longrightarrow
    \pi_1^2\rho_1^2+
    \pi_2^2\rho_2^2+
    2\pi_1\pi_2\rho_B^2.
\]
Hence \(\mathrm{WD}_0\) fails if no  PC is removed when \(\rho_1>0\) or \(\rho_2>0\).

Next we show that after 2 PCs are removed, the weak dependence condition is fulfilled.
Assume \(\pi_1,\pi_2>0\) and \(\rho_B^2<\rho_1\rho_2\). Then both eigenvalues \(\lambda_+(M_m)\) and \(\lambda_-(M_m)\) are of order \(m\), so \(K=2\). According to the orthonormal basis \(S\) in Equation~\eqref{eq:eigenspace_two_block}, and the corresponding projection is
\[
    V_2V_2^\top
    =
    u_1u_1^\top+u_2u_2^\top
    =
    \begin{bmatrix}
    \frac{1}{m_1}\mathbf 1_{m_1}\mathbf 1_{m_1}^\top & 0\\[4pt]
    0 & \frac{1}{m_2}\mathbf 1_{m_2}\mathbf 1_{m_2}^\top
    \end{bmatrix}.
\]
Therefore
\[
\begin{aligned}
    \Sigma_{\cdot 2}
    &=
    (I_m-V_2V_2^\top)\Sigma(I_m-V_2V_2^\top) \\
    &=
    \begin{bmatrix}
    \left(I_{m_1}-\frac{1}{m_1}\mathbf 1_{m_1}\mathbf 1_{m_1}^\top\right) & 0\\[4pt]
    0 & \left(I_{m_2}-\frac{1}{m_2}\mathbf 1_{m_2}\mathbf 1_{m_2}^\top\right)
    \end{bmatrix}
    \Sigma
    \begin{bmatrix}
    \left(I_{m_1}-\frac{1}{m_1}\mathbf 1_{m_1}\mathbf 1_{m_1}^\top\right) & 0\\[4pt]
    0 & \left(I_{m_2}-\frac{1}{m_2}\mathbf 1_{m_2}\mathbf 1_{m_2}^\top\right)
    \end{bmatrix}.
\end{aligned}
\]
For the first diagonal block,
\[
\begin{aligned}
\left(I_{m_1}-\frac{1}{m_1}\mathbf 1_{m_1}\mathbf 1_{m_1}^\top\right)
\left\{(1-\rho_1)I_{m_1}+\rho_1\mathbf 1_{m_1}\mathbf 1_{m_1}^\top\right\}
\left(I_{m_1}-\frac{1}{m_1}\mathbf 1_{m_1}\mathbf 1_{m_1}^\top\right) 
=
(1-\rho_1)
\left(I_{m_1}-\frac{1}{m_1}\mathbf 1_{m_1}\mathbf 1_{m_1}^\top\right).
\end{aligned}
\]Similarly for the second diagonal block. The off-diagonal blocks vanish because
\begin{equation*}
    \left(I_{m_1}-\frac{1}{m_1}\mathbf 1_{m_1}\mathbf 1_{m_1}^\top\right)\mathbf 1_{m_1}=\left(I_{m_2}-\frac{1}{m_2}\mathbf 1_{m_2}\mathbf 1_{m_2}^\top\right)\mathbf 1_{m_2}=0.
\end{equation*} Thus the residual covariance matrix is
\[
\Sigma_{\cdot 2}
=
\begin{bmatrix}
(1-\rho_1)
\left(I_{m_1}-\frac{1}{m_1}\mathbf 1_{m_1}\mathbf 1_{m_1}^\top\right)
& 0\\[6pt]
0 &
(1-\rho_2)
\left(I_{m_2}-\frac{1}{m_2}\mathbf 1_{m_2}\mathbf 1_{m_2}^\top\right)
\end{bmatrix}.
\]
with diagonal
\[
D_{\cdot 2}
:=
\diag(\Sigma_{\cdot 2})
=
\begin{bmatrix}
(1-\rho_1)\left(1-\frac{1}{m_1}\right)I_{m_1} & 0\\[4pt]
0 & (1-\rho_2)\left(1-\frac{1}{m_2}\right)I_{m_2}
\end{bmatrix}.
\]
Therefore
\[
\begin{aligned}
\widetilde{\Sigma}_{\cdot 2}
&:=
D_{\cdot 2}^{-1/2}\Sigma_{\cdot 2}D_{\cdot 2}^{-1/2} 
\begin{bmatrix}
\frac{m_1}{m_1-1}
\left(I_{m_1}-\frac{1}{m_1}\mathbf 1_{m_1}\mathbf 1_{m_1}^\top\right)
&0\\[6pt]
0&
\frac{m_2}{m_2-1}
\left(I_{m_2}-\frac{1}{m_2}\mathbf 1_{m_2}\mathbf 1_{m_2}^\top\right)
\end{bmatrix}.
\end{aligned}
\]
The nonzero eigenvalues of \(\widetilde{\Sigma}_{\cdot 2}\) are
\begin{equation*}
    \frac{m_1}{m_1-1}
    \quad\text{with multiplicity }m_1-1\quad\text{and}\quad \frac{m_2}{m_2-1}
    \quad\text{with multiplicity }m_2-1.
\end{equation*}Hence
\[
\begin{aligned}
    \tr\!\left(\widetilde{\Sigma}_{\cdot 2}^{\,2}\right)
    &=
    (m_1-1)\left(\frac{m_1}{m_1-1}\right)^2
    +(m_2-1)\left(\frac{m_2}{m_2-1}\right)^2 =
     m+2+\frac{1}{m_1-1}+\frac{1}{m_2-1}.
\end{aligned}
\]

If \(m_j/m\to\pi_j\in(0,1)\) and \(m_1,m_2\to\infty\), then
\[
    \mu_2
    =
    \frac{1}{m}
    \tr\!\left(\widetilde{\Sigma}_{\cdot 2}^{\,2}\right)
    =
    1+\frac{2}{m}+o\!\left(\frac{1}{m}\right)
    \longrightarrow 1.
\]
So following Proposition~\ref{prop:pop-AS},
\[
\begin{aligned}
    \frac{1}{m^2}
    \tr\!\left(\widetilde{\Sigma}_{\cdot 2}^{\,2}\right)
    =
    \frac{m+2+o(1)}{m^2} =
    O\!\left(\frac{1}{m}\right)
    \longrightarrow 0.
\end{aligned}
\]

\subsection{Delta-method variance for the decomposed FVE estimator}
\label{appendix:delta_var_decomp}

Recall from Proposition~\ref{prop:pc_fve_decomp} that
\[
h^2_{\beta}=h^2_k+(1-h^2_k)h^2_{.k}.
\]
Accordingly, the estimator is
\[
\hat{h}^2_{\beta}
=
\hat{h}^2_k+(1-\hat{h}^2_k)\hat{h}^2_{.k}.
\]
Define
\[
g(a,b)=a+(1-a)b=a+b-ab,
\]
so that
\[
h^2_{\beta}=g(h^2_k,h^2_{.k}),
\qquad
\hat{h}^2_{\beta}=g(\hat{h}^2_k,\hat{h}^2_{.k}).
\]

Let
\[
\Delta_k=\hat{h}^2_k-h^2_k,
\qquad
\Delta_{.k}=\hat{h}^2_{.k}-h^2_{.k}.
\]
Then
\begin{align*}
\hat{h}^2_{\beta}-h^2_{\beta}
&=
g(h^2_k+\Delta_k,\;h^2_{.k}+\Delta_{.k})-g(h^2_k,h^2_{.k}) \\
&=
(h^2_k+\Delta_k)+(1-h^2_k-\Delta_k)(h^2_{.k}+\Delta_{.k})
-\bigl[h^2_k+(1-h^2_k)h^2_{.k}\bigr] \\
&=
(1-h^2_{.k})\Delta_k
+
(1-h^2_k)\Delta_{.k}
-
\Delta_k\Delta_{.k}.
\end{align*}
Thus the first-order Taylor expansion is
\[
\hat{h}^2_{\beta}-h^2_{\beta}
=
(1-h^2_{.k})(\hat{h}^2_k-h^2_k)
+
(1-h^2_k)(\hat{h}^2_{.k}-h^2_{.k})
+
R_n,
\]
with exact remainder
\[
R_n
=
-(\hat{h}^2_k-h^2_k)(\hat{h}^2_{.k}-h^2_{.k}).
\]

If both component estimators are root-$n$ consistent, that is,
\[
\hat{h}^2_k-h^2_k=O_p(n^{-1/2}),
\qquad
\hat{h}^2_{.k}-h^2_{.k}=O_p(n^{-1/2}),
\]
then
\[
R_n=O_p(n^{-1}),
\]
so the product term is of smaller order than the linear terms. Therefore,
\[
\hat{h}^2_{\beta}-h^2_{\beta}
=
(1-h^2_{.k})(\hat{h}^2_k-h^2_k)
+
(1-h^2_k)(\hat{h}^2_{.k}-h^2_{.k})
+
o_p(n^{-1/2}).
\]

Taking variances on both sides gives the first-order approximation
\begin{align*}
\Var(\hat{h}^2_{\beta})
&=
\Var\!\left[
(1-h^2_{.k})(\hat{h}^2_k-h^2_k)
+
(1-h^2_k)(\hat{h}^2_{.k}-h^2_{.k})
\right]
+
o(n^{-1}) \\
&=
(1-h^2_{.k})^2\Var(\hat{h}^2_k)
+
(1-h^2_k)^2\Var(\hat{h}^2_{.k}) \\
&\qquad
+
2(1-h^2_{.k})(1-h^2_k)\Cov(\hat{h}^2_k,\hat{h}^2_{.k})
+
o(n^{-1}).
\end{align*}

In practice, the unknown weights $1-h^2_k$ and $1-h^2_{.k}$ are replaced by their
plug-in estimators, leading to
\[
\Var(\hat{h}^2_{\beta})
\approx
(1-\hat{h}^2_{.k})^2 \Var(\hat{h}^2_k)
+
(1-\hat{h}^2_k)^2 \Var(\hat{h}^2_{.k})
+
2(1-\hat{h}^2_{.k})(1-\hat{h}^2_k)\Cov(\hat{h}^2_k,\hat{h}^2_{.k}).
\]
We use this estimator to 
approximately estimate the variance:
\[
\Var(\hat{h}^2_{\beta})
\approx
(1-\hat{h}^2_{.k})^2 \Var(\hat{h}^2_k)
+
(1-\hat{h}^2_k)^2 \Var(\hat{h}^2_{.k}).
\] 

\subsection{Simulation of splitting surrogate and sample data}
\label{surrogate_sample_datasplitting}

We conduct a simulation following the procedure in Section~\ref{sec_FVE_decomposition} to evaluate how the decomposition estimator depends on the size of the surrogate dataset used to estimate eigenvectors. We fix the total number of observations across the study sample and the surrogate sample at $n_{\mathrm{tot}}=2000$, with $m=4000$ predictors, and vary the split between the study sample size $n$ and the surrogate sample size $n_{\mathrm{sur}}=n_{\mathrm{tot}}-n$. The surrogate data are used only to compute the surrogate eigenvectors $V_{\mathrm{sur},k}$, which define the PC scores $W_k=XV_{\mathrm{sur},k}$. We then estimate the low-dimensional component $h_k^2$ using Ezekiel’s adjusted $R^2$, and estimate the high-dimensional residual component $h_{\cdot k}^2$ using GWASH or LMM-REML applied to the residualized data, where the response is residualized using the left-diagonal projection in \eqref{left_equation_diagonal}. The two components are combined to obtain the total FVE estimate as described in Section~\ref{sec_FVE_decomposition}.

\begin{figure}
    \centering
    \includegraphics[width=0.8\linewidth]{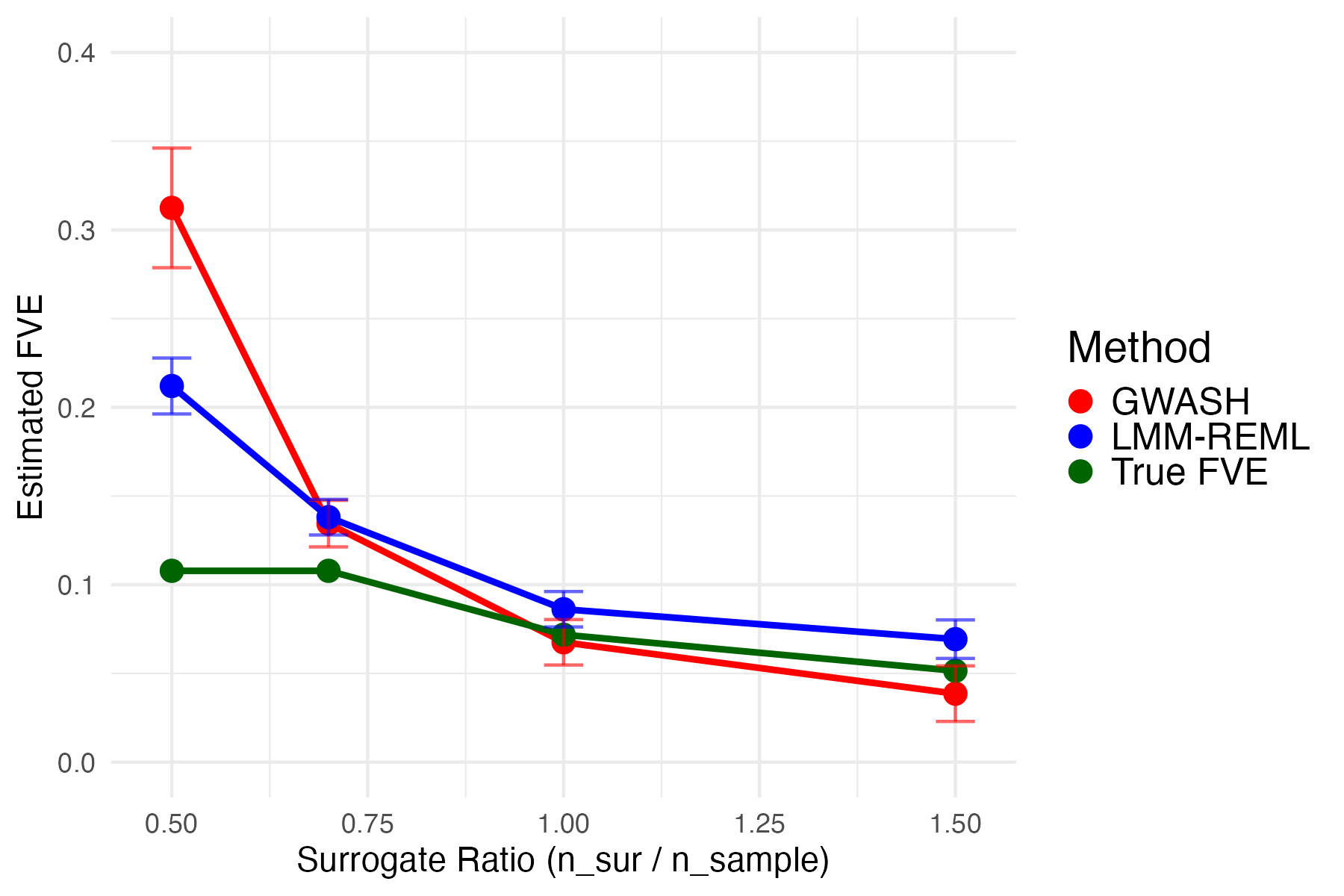}
    \caption{Monte Carlo stimulation for splitting surrogate and sample data under one-block exchangeable correlation structure. After conducting FVE decomposition with linear diagonal projection on 1 PC, the high-dimension FVE is estimated by GWASH (solid red line) and LMM-REML (solid blue line) and compared with the true FVE (solid green line). Error bars show the standard error across 100 replications.}
    \label{fig:ratio_between_surrogateandsample}
\end{figure}

Figure~\ref{fig:ratio_between_surrogateandsample} shows that when the surrogate sample is substantially smaller than the study sample (e.g., $n_{\mathrm{sur}}$ is about half of $n$), the resulting FVE estimates exhibit noticeable bias, reflecting increased noise in the surrogate eigenvectors. As the surrogate sample size increases and becomes comparable to the study sample size, this bias decreases and the estimated FVE approaches the true FVE.

\subsection{Empirical standard error compared with standard error estimator}
\label{appendix:SE_comparison}
We further evaluate the standard error estimator for the decomposed FVE estimator in Equation~\eqref{eq:var_formula_FVE_decom}. In this section, the empirical standard errors are calculated as the Monte Carlo standard deviation of the FVE estimates across simulation replicates. The formula-based SE is based on the first-order delta-method approximation for the decomposed estimator
$\hat h^2_\beta$
via Equation~\eqref{eq:var_formula_FVE_decom}. The simulation results below show that this approximation performs well in most settings, especially after the dominant PCs have been removed.

Table~\ref{sim_oneblock_table_SEF_SEE} reports the comparison under the one-block exchangeable correlation structure. Overall, the formula-based and empirical SEs are close across FVE levels, correlation strengths, and choices of $k$. In the weaker correlation setting $(\rho=0.2)$, the agreement is strong for both GWASH and LMM-REML, regardless of whether no PC or multiple PCs are removed. In the stronger correlation setting $(\rho=0.8)$, the agreement remains good for LMM-REML. Table~\ref{sim_twoblock_table_SE_only} gives the analogous comparison for the two-block exchangeable correlation structure. Taken together, these results suggest that the proposed standard error formula provides a reasonable approximation to the Monte Carlo variability of the decomposed FVE estimator. 
\begin{table}
\centering
\caption{Formula-based standard error and empirical standard error  of GWASH and LMM-REML estimators in Monte Carlo simulation under one-block exchangeable correlation structures for varying FVE ($0.2$, $0.8$) and correlation levels ($\rho = 0.2$, $0.8$).}
\label{sim_oneblock_table_SEF_SEE}
\resizebox{\textwidth}{!}{%
\begin{tabular}{rcccc|cccc}
\toprule
\multirow{3}{*}{$k$}
& \multicolumn{4}{c|}{FVE = 0.2}
& \multicolumn{4}{c}{FVE = 0.8} \\
\cmidrule(lr){2-5} \cmidrule(lr){6-9}
& \multicolumn{2}{c}{GWASH}
& \multicolumn{2}{c|}{LMM-REML}
& \multicolumn{2}{c}{GWASH}
& \multicolumn{2}{c}{LMM-REML} \\
\cmidrule(lr){2-3} \cmidrule(lr){4-5}
\cmidrule(lr){6-7} \cmidrule(lr){8-9}
& Formula-based & Empirical
& Formula-based & Empirical
& Formula-based & Empirical
& Formula-based & Empirical \\
\midrule
\multicolumn{9}{l}{Correlation $\boldsymbol{\rho = 0.2}$} \\
0   & 0.042 & 0.044 & 0.073 & 0.075 & 0.089 & 0.084 & 0.041 & 0.039 \\
1   & 0.067 & 0.066 & 0.065 & 0.067 & 0.070 & 0.065 & 0.043 & 0.040 \\
10  & 0.067 & 0.066 & 0.065 & 0.067 & 0.070 & 0.065 & 0.043 & 0.041 \\
50  & 0.067 & 0.065 & 0.065 & 0.066 & 0.068 & 0.064 & 0.043 & 0.041 \\
100 & 0.066 & 0.063 & 0.065 & 0.064 & 0.067 & 0.062 & 0.042 & 0.039 \\
\midrule
\multicolumn{9}{l}{Correlation $\boldsymbol{\rho = 0.8}$} \\
0   & 0.022 & 0.021 & 0.136 & 0.131 & 0.047 & 0.029 & 0.043 & 0.045 \\
1   & 0.061 & 0.062 & 0.062 & 0.051 & 0.042 & 0.041 & 0.037 & 0.037 \\
10  & 0.061 & 0.062 & 0.062 & 0.051 & 0.042 & 0.041 & 0.037 & 0.037 \\
50  & 0.061 & 0.061 & 0.062 & 0.052 & 0.041 & 0.042 & 0.036 & 0.037 \\
100 & 0.061 & 0.060 & 0.062 & 0.050 & 0.040 & 0.041 & 0.035 & 0.036 \\
\bottomrule
\end{tabular}}
\end{table}

\begin{table}
\centering
\caption{Formula-based standard error and empirical standard error of GWASH and LMM-REML estimators in Monte Carlo simulation under a two-block exchangeable correlation structure ($\rho_1 = 0.1$, $\rho_2 = 0.9$, $\rho_\beta = 0$) for varying FVE ($0.2$, $0.8$).}
\label{sim_twoblock_table_SE_only}

\resizebox{\textwidth}{!}{%
\begin{tabular}{rcccc|cccc}
\toprule
\multirow{3}{*}{$k$}
& \multicolumn{4}{c|}{FVE = 0.2}
& \multicolumn{4}{c}{FVE = 0.8} \\
\cmidrule(lr){2-5} \cmidrule(lr){6-9}
& \multicolumn{2}{c}{GWASH}
& \multicolumn{2}{c|}{LMM-REML}
& \multicolumn{2}{c}{GWASH}
& \multicolumn{2}{c}{LMM-REML} \\
\cmidrule(lr){2-3} \cmidrule(lr){4-5}
\cmidrule(lr){6-7} \cmidrule(lr){8-9}
& Formula-based & Empirical
& Formula-based & Empirical
& Formula-based & Empirical
& Formula-based & Empirical \\
\midrule
\multicolumn{9}{l}{Two-block exchangeable: $\rho_1 = 0.1,\ \rho_2 = 0.9$} \\
0   & 0.052 & 0.045 & 0.074 & 0.067 & 0.085 & 0.043 & 0.030 & 0.028 \\
1   & 0.036 & 0.030 & 0.056 & 0.057 & 0.039 & 0.024 & 0.027 & 0.023 \\
2   & 0.056 & 0.055 & 0.058 & 0.051 & 0.036 & 0.035 & 0.027 & 0.022 \\
10  & 0.056 & 0.055 & 0.058 & 0.051 & 0.036 & 0.035 & 0.026 & 0.021 \\
50  & 0.055 & 0.053 & 0.058 & 0.049 & 0.035 & 0.034 & 0.026 & 0.022 \\
100 & 0.054 & 0.053 & 0.057 & 0.048 & 0.034 & 0.035 & 0.025 & 0.022 \\
\bottomrule
\end{tabular}}
\end{table}

\end{document}